
\documentclass[a4paper, 11pt, abstracton, titlepage, oneside]{scrartcl}
\makeatletter

\usepackage[T1]{fontenc}
\usepackage[utf8]{inputenc}
\usepackage[onehalfspacing]{setspace}
\usepackage[headsepline]{scrlayer-scrpage}
\pagestyle{scrheadings} 
\clearscrheadfoot 
\usepackage{layout}
\usepackage[english]{babel}
\usepackage[left=2.54cm,right=2.54cm,top=2.54cm,bottom=2.54cm]{geometry}

\usepackage{algorithm}
\usepackage[noend]{algpseudocode}
\usepackage{adjustbox}
\usepackage{eurosym}
\usepackage{lscape}
\usepackage{nicefrac}
\usepackage{array}
\usepackage{multicol}
\usepackage{paralist}
\usepackage{authblk}
\usepackage{graphicx}
\usepackage{booktabs}
\usepackage{placeins}
\usepackage{amsmath}
\usepackage{mathtools}
\usepackage{bm}
\usepackage{amssymb}
\usepackage{color}
\usepackage{changepage} 
\usepackage{supertabular}
\usepackage[acronym]{glossaries}
\usepackage{pgfplots}
\usepackage[normal]{threeparttable}
\usepackage{ulem}
\usepackage{natbib}
\usepackage[titletoc,title]{appendix}
\usepackage[hyphens]{url} 
\usepackage{ellipsis}
\usepackage{breqn} 
\let\counterwithin\relax

\usepackage{chngcntr}  
\usepackage{enumerate}
\usepackage{theorem}
\usepackage{etoolbox}
\usepackage{multirow}
\usepackage{colortbl}
\usepackage{subcaption}
\newcommand\fnote[1]{\captionsetup{font=small}\caption*{#1}}
\usepackage{tikz}
\usetikzlibrary{arrows.meta}
\usepackage{relsize}
\usepackage{stackengine}
\usepackage{wrapfig}
\usepackage{lscape}
\usepackage{epstopdf}
\usepackage{rotating}
\patchcmd{\endabstract}{\null}{}{}{} 
\usepackage{babel}
\addto{\captionsenglish}{}

\usepgfplotslibrary{groupplots}
\usepgfplotslibrary{dateplot}
\usepackage[utf8]{inputenc}
\ifdefined\DeclareUnicodeCharacter
\DeclareUnicodeCharacter{2212}{-}
\fi

\usepackage{array}
\newcolumntype{x}[1]{>{\centering\arraybackslash\hspace{0pt}}p{#1}}

\usepackage{amsthm}
\newtheorem{prop}{Proposition}

\newtheorem{res}{Result}
\newtheorem{observation}{Observation} 

\let\vec\mathbf

\ifdefined\mathleft
\else
\newcommand{\mathleft}{}
\fi	
\ifdefined\mathright
\else
\newcommand{\mathright}{}
\fi
\ifdefined\mathcenter
\else
\newcommand{\mathcenter}{}
\fi

\DeclareMathOperator*{\argmin}{arg\,min}

\makeatletter
\providecommand\phantomcaption{\caption@refstepcounter\@captype}
\makeatother

\newglossary{infoslist}{1i}{1o}{Information variants}
\makeglossaries
\newacronym{cs}{CS}{Charging Station}
\newacronym{ev}{EV}{electric vehicle}
\newacronym{sdp}{SDP}{stochastic dynamic programming}
\newacronym{mdp}{MDP}{Markov decision process}
\newacronym{pne}{PNE}{pure Nash equilibrium}
\newacronym{fcsag}{FCSA}{fleet charging station allocation}
\newacronym{wlo}{WLO}{white label operator}
\newacronym{md}{MD}{mechanism design}
\newacronym{ai}{AI}{artificial intelligence}

\newglossaryentry{self}{name={\textit{P-SELF}},
	description={Low-density scenario},
	type=infoslist}
\newglossaryentry{greed}{name={\textit{D-SELF}},
	description={Low-density scenario},
	type=infoslist}
\newglossaryentry{vcg}{name={\textit{VCG}},
	description={Low-density scenario},
	type=infoslist}

\newglossaryentry{all}{name={\textit{equal}},
	description={Low-density scenario},
	type=infoslist}
\newglossaryentry{big}{name={\textit{big}},
	description={Low-density scenario},
	type=infoslist}
\newglossaryentry{small}{name={\textit{small}},
	description={Low-density scenario},
	type=infoslist}

%

\definecolor{lightblue}{rgb}{0.60784,0.76078,0.90196}
\definecolor{darkblue}{rgb}{0.26667,0.44706,0.76863}
\definecolor{lightgreen}{rgb}{0.66275,0.81569,0.55686}
\definecolor{darkgreen}{rgb}{0.43922,0.67843,0.27843}
\definecolor{orange}{rgb}{0.92941,0.49020,0.19216}
\definecolor{yellow}{rgb}{1.00000,0.75294,0.00000}
\definecolor{grey}{rgb}{0.64706,0.64706,0.64706}
\definecolor{purple}{rgb}{0.51373,0.23529,0.04706}


%
\Urlmuskip=0mu plus 4mu
\DeclareOldFontCommand{\rm}{\normalfont\rmfamily}{\mathrm}
\DeclareOldFontCommand{\sf}{\normalfont\sffamily}{\mathsf}
\DeclareOldFontCommand{\tt}{\normalfont\ttfamily}{\mathtt}
\DeclareOldFontCommand{\bf}{\normalfont\bfseries}{\mathbf}
\DeclareOldFontCommand{\it}{\normalfont\itshape}{\mathit}
\DeclareOldFontCommand{\sl}{\normalfont\slshape}{\@nomath\sl}
\DeclareOldFontCommand{\sc}{\normalfont\scshape}{\@nomath\sc}
\makeatother
%

\counterwithin*{equation}{section}

\bibpunct[, ]{(}{)}{,}{a}{}{,}%
%
%
%
%
%

\newcolumntype{L}[1]{>{\raggedright\let\newline\\\arraybackslash\hspace{0pt}}p{#1}}
\newcolumntype{C}[1]{>{\centering\let\newline\\\arraybackslash\hspace{0pt}}p{#1}}
\newcolumntype{R}[1]{>{\raggedleft\let\newline\\\arraybackslash\hspace{0pt}}p{#1}}
%


%
\addtokomafont{captionlabel}{\footnotesize\bfseries} 
\addtokomafont{caption}{\footnotesize\bfseries} 

\AtBeginDocument{%
	\abovedisplayskip=2.5pt plus 0pt minus 0pt
	\abovedisplayshortskip=2.5pt plus 0pt minus 0pt
	\belowdisplayskip=2.5pt plus 0pt minus 0pt
	\belowdisplayshortskip=2.5pt plus 0pt minus 0pt
}

\newsavebox{\abstractbox}
\renewenvironment{abstract}
{\begin{lrbox}{0}\begin{minipage}{\textwidth}
			\begin{center}\normalfont\sectfont\abstractname\end{center}\quotation}
		{\endquotation\end{minipage}\end{lrbox}%
	\global\setbox\abstractbox=\box0 }

\makeatletter
\expandafter\patchcmd\csname\string\maketitle\endcsname
{\vskip\z@\@plus3fill}
{\vskip\z@\@plus2fill\box\abstractbox\vskip\z@\@plus1fill}
{}{}
\makeatother

\makeatletter
\def\BState{\State\hskip-\ALG@thistlm}
\makeatother

\setlength{\affilsep}{1em}


\counterwithin*{equation}{section}



\DeclareTextFontCommand{\textmyfont}{\myfont}

\begin{document}

\definecolor{lightblue}{rgb}{0.60784,0.76078,0.90196}
\definecolor{darkblue}{rgb}{0.26667,0.44706,0.76863}
\definecolor{lightgreen}{rgb}{0.66275,0.81569,0.55686}
\definecolor{darkgreen}{rgb}{0.43922,0.67843,0.27843}
\definecolor{orange}{rgb}{0.92941,0.49020,0.19216}
\definecolor{yellow}{rgb}{1.00000,0.75294,0.00000}
\definecolor{grey}{rgb}{0.64706,0.64706,0.64706}
\definecolor{purple}{rgb}{0.51373,0.23529,0.04706}

\newcommand{\orange}[1]{\textcolor{orange}{#1}}

\newacronym{cs}{CS}{charging station}
\newacronym{dp}{DP}{dynamic programming}
\newacronym{abk:eu}{EU}{European Union}
\newacronym{abk:ecv}{ECV}{electric commercial vehicle}
\newacronym{ev}{EV}{electric vehicle}
\newacronym{abk:ghg}{GHG}{greenhouse gas}
\newacronym{abk:icev}{ICEV}{internal combustion engine vehicles}
\newacronym{mdp}{MDP}{Markov decision process}
\newacronym{abk:rcspp}{RCSPP}{resource constraint shortest path problem}
\newacronym{abk:ref}{REF}{resource extension function}
\newacronym{sdp}{SDP}{stochastic dynamic programming}
\newacronym{Soc}{SoC}{state of charge}
\newacronym{scps}{SCPS}{stochastic charge pole search}


\title{Coordinating charging request allocation between self-interested navigation service platforms}

\author[1,2]{\normalsize Marianne Guillet}
\author[1,3]{\normalsize Maximilian Schiffer}
\affil{\small 
	TUM School of Management, Technical University of Munich, 80333 Munich, Germany
	
	\textsuperscript{2} TomTom Location Technology Germany GmbH, 12435 Berlin, Germany
	
	
	\textsuperscript{3} Munich Data Science Institute, Technical University of Munich, 80333 Munich, Germany,
					
	\scriptsize
	marianne.guillet@tomtom.com,
	schiffer@tum.de}

\date{}

\lehead{\pagemark}
\rohead{\pagemark}

\begin{abstract}
	Current electric vehicle market trends indicate an increasing adoption rate across several countries. To meet the expected growing charging demand, it is necessary to scale up the current charging infrastructure and to mitigate current reliability deficiencies, e.g., due to broken connectors or misreported charging station availability status. However, even within a properly dimensioned charging infrastructure, a risk for local bottlenecks remains if several drivers cannot coordinate their charging station visit decisions. 
	Here, navigation service platforms can optimally balance charging demand over available stations to reduce possible station visit conflicts and increase user satisfaction. While such fleet-optimized charging station visit recommendations may alleviate local bottlenecks, they can also harm the system if self-interested navigation service platforms seek to maximize their own customers' satisfaction. 
	To study these dynamics, we model fleet-optimized charging station allocation as a resource allocation game in which navigation platforms constitute players and assign potentially free charging stations to drivers. We show that no pure Nash equilibrium guarantee exists for this game, which motivates us to study VCG mechanisms both in offline and online settings, to coordinate players' strategies toward a better social outcome. 
	%
	%
	Extensive numerical studies for the city of Berlin show that when coordinating players through VCG mechanisms, the social cost decreases on average by 42\% in the online setting and by 52\% in the offline setting. 
\begin{singlespace}
{\small\noindent\\
\smallskip}
{\footnotesize\noindent \textbf{Keywords:} electric vehicles, mechanism design, resource allocation, congestion game}
\end{singlespace}
\end{abstract}

\maketitle	


\section{Introduction}



Battery-powered vehicles support the shift towards more sustainable mobility systems, especially if coupled with the overall increased use of public transportation systems.
%
After a slow start for EV adoption, the private electric vehicles market recently showed an encouraging growth, with global adoption doubling in 2020 and 2021, and car manufacturers planning to launch more than 100 new electric vehicle models by 2024.
However, studies reveal strong adoption heterogeneity, at national and regional scales. In 2020, Norway had a (BH)EV market share of more than 70\% \citep{cleantechnica2021}, whereas Romania's EV market share was less than 10\% \citep{AVER2021}. In the latter case, the adoption is thus far hindered by poor or missing charging infrastructure.
In general, charging-related anxieties caused by limited infrastructure availability dissuade conventional vehicle drivers from switching to electric vehicles.
Drivers may experience so-called \textit{range anxiety}, i.e., the fear of running out of battery, or \textit{charge anxiety}, i.e., the fear of not finding any available and non-broken public charging station.
While building new charging stations may alleviate these anxieties, operating over-dimensioned charging infrastructure in low charging demand areas can be highly cost-inefficient \citep{NelderRogers2019}.
%
Overcoming this chicken-and-egg dilemma, i.e., deciding whether to pursue EV adoption or charging infrastructure first, can be mitigated with appropriate infrastructure planning and operation \cite[cf.][]{Enlit2022}.

From a short-term perspective, drivers' anxieties related to charging in an undersized infrastructure can be addressed via advanced operational planning, e.g., a navigation system that reliably guides EV drivers until they find an available station, anticipating possibly blocked, unreachable, or out-of-order stations.  
%
To foster EV adoption from a long-term perspective, regulatory measures, e.g., a ban on conventional vehicles in city centers or financial incentivization are necessary, but must be complemented by planning solutions.
Such solutions include developing the charging infrastructure with new equipment, consolidating existing urban infrastructure (e.g., street lamp charging), or increasing the reliability of real-time availability information through detection sensors or cameras.

The expected growing charging demand can nevertheless create new bottlenecks, i.e., charging station congestion, if demand is misaligned with charging infrastructure capacities.
Even with an increased charging infrastructure coverage, several EV drivers may receive identical charging station recommendations from navigation services platforms, which creates station utilization conflicts due to long recharging times.
Here, fleet-optimized recommendations -- enabled by requests centralization -- can better distribute the charging demand over the available charging stations and increase the overall users' satisfaction \citep[cf.][]{GuilletSchiffer2022}.

Such a fleet-optimization may work very well in presence of a single platform but finds its limits with the presence of several self-interested navigation platforms, each seeking to maximize their own drivers' utilities. In this case, the overall outcome of the drivers' assignment to stations may be far from optimal due to conflicting charging station allocations in between platforms.
%
%
Accordingly, solely improving the infrastructure with no further charging demand coordination at the system level may worsen the EV charging experience by decreasing the likelihood of finding an available station within a minimum amount of time.
If a system composed of multiple navigation services platforms cannot reach any good equilibrium, a centralized non-profit-driven authority, e.g., a municipality, may aim to improve the system towards a socially efficient outcome, using financial incentives or penalties.

To understand the dynamics of such a setting, we introduce the \gls{fcsag} game in which several self-interested navigation platforms aim to individually optimize the assignment of available stations to their EV drivers, i.e., their charging demand.
%
In this context, the scope of this paper is twofold. First, we use a game-theoretical framework to analyze the dynamics and bottlenecks that may arise in this setting due to non-cooperative driver assignments. 	
We show that no guarantee for a \gls{pne} exists, which motivates our second scope: deriving mechanisms to stabilize and coordinate the charging station allocation, studying the impact of aligning platforms' interests from a platform but also from a driver perspective.
%


	\subsection{Related Work}
	
	Our work relates to indivisible resource allocation problems for \glspl{ev} and other related applications, e.g., parking utilization, which we review in the following. We first discuss resource allocation games, particularly with a focus on equilibrium analyses, before we review works that apply mechanism design to resource allocation problems.

	
	Congestion games, initially introduced in \cite{Rosenthal1973}, model non-cooperative atomic resource allocation between homogeneous agents. Here, a player's strategy is defined as a subset of the available resources, whose utilization cost depends on the total number of players selecting the resource. 
	This class of games naturally applies to load-balancing \citep[see, e.g., ][]{EvenDarKesselmanEtAl2003,GoemansMirrokniEtAl2006}, network design \citep[see, e.g., ][]{AnshelevichDasguptaEtAl2008}, or internet routing \citep[see, e.g., ][]{SecciRougierEtAl2011}.
	Congestion games fall into the category of potential games \citep[cf.][]{MondererShapley1996}, i.e., games that admit a potential function, which describes the payoff variation between two different strategies independent of the players. Such games possess a guaranteed \gls{pne} existences.
	\cite{Milchtaich1996}'s work extends \cite{Rosenthal1973}'s work by showing the existence of a \gls{pne} in congestion games with player-specific payoff functions or weighted players, but restricted singleton strategies.
	\cite{AckermannRoeglinEtAl2009} later show that strategies that consist of the bases of a matroid defined on the resources set still guarantee the \gls{pne}'existence for games analyzed in \cite{Milchtaich1996}.
%

	Most EV-related games deal with charging capacity allocation in grid networks, focused on optimizing grid operators' revenues \citep[][]{TusharSaadEtAl2012}, reducing energy peak load through load and capacity-dependent energy prices \citep[][]{SheikhimBahrami2013}, or mitigating the impact of charging vehicles on transportation networks \citep[][]{SohetHayelEtAl2021}.
%
	\cite{AyalaWolfsonEtAl2011} analyze related competitive parking slot assignment problems, and show the existence of a \gls{pne} when each player, i.e., driver, must select at most one parking spot for which a successful utilization depends on her and other players' driving distances to all parking spots.
	\cite{HeYinEtAl2015} further extend the result of \cite{AyalaWolfsonEtAl2011} by analytically characterizing equilibrium parking slot assignments.
%

	While the \gls{fcsag} game studied in this paper resembles a congestion game, its payoff functions are user-specific and depend on the number of other players selecting an identical resource only under some conditions: while all drivers competing for the same resource affect each other's payoff in a classical congestion game, in the \gls{fcsag} game only drivers that are closer to utilizing the resource affect the payoff function of drivers that are further away from utilizing the resource.
	Moreover, congestion games have not yet been studied in the context of charging station allocation via intermediate self-interested participants.

	To deal with inefficient equilibria, \gls{md} theory aims to design games in such a way that a socially desirable outcome is reached \citep{NisanRoughgardenEtAl2007}.
	A third party, a so-called principal, centrally allocates goods or resources to its participants, who must reveal their preferences on the resources before paying a price for the received resource.
	In this context, the well-known VCG mechanism \citep{Vickrey1961,Clarke1971,Groves1973} defines a pricing scheme that ensures revelation truthfulness by aligning participants and the overall system interests in an offline setting.
%
	As an offline allocation may be hard to realize in practice, \cite{Parkes2003} propose two online VCG mechanisms, namely delayed and online VCG, that utilize a \gls{mdp} to derive the minimal expected cost allocation, while ensuring that expected participant utilities align with the expected system interest. 
	They show the Bayesian-Nash truthfulness of these mechanisms and apply them successfully to WiFi pricing \citep{FriedmanParkes2003}. 
	As the pricing scheme relies on an optimal policy argument, the authors later discuss approximately efficient online \gls{md} using $\epsilon$-efficient policies. More recently, \cite{SteinOchalEtAl2020} proposed a reinforcement learning-based mechanism that guarantees strategy-proofness when the resource allocation problem is solved online. 
	Focusing on EV charging, \cite{RigasGerdingEtAl2020} apply standard VCG pricing schemes to achieve efficient charge scheduling of self-interested EV drivers by minimizing the charging demand imbalance at a station.
	%
	\cite{GerdingRobuEtAL2011} design a mechanism to solve the online charge scheduling problem, blind to future charging demand.
	Several works have studied mechanisms in the related context of parking slot allocation.
	\cite{XuChengEtAl2016} derive a trading mechanism to share private parking slots during office hours in big cities between self-interested owners, to remedy limited public parking spot availability.
	\cite{ZouKafleEtAl2015} extend the parking slot assignment game of \cite{AyalaWolfsonEtAl2011} by applying VCG mechanisms to a publicly-owned parking facility that aims to maximize the social welfare, both in static and dynamic settings. 
	\cite{WangGuanEtAl2014} derive an optimal allocation pricing scheme based on a Demange-Gale-Sotomayor-based mechanism, for reservable and non-reservable parking resources in a city.
	
	In summary, most EV-related settings focus on charge scheduling problems with homogeneous self-interested EV drivers, while this work focuses on navigation platforms as self-interested participants. Similarly, work on parking slot allocation problems focus on directly incentivizing drivers to truthfully report their valuations, but not on intermediaries that could balance aggregated parking demand over available slots.

	\subsection{Aims and Scope}
	
	To close the research gaps outlined above, we introduce a game-theoretical setting to study the dynamics of self-interested navigation platforms that aim to best balance their own clients' charging requests among available charging stations, focusing first on equilibrium analysis, and second on mechanism design.
	We formalize the resulting charging station allocation problem as a novel game, and apply both offline and online VCG mechanisms to ensure socially desirable outcomes in idealized but also in practical settings.
	Specifically, our contribution is three-fold.
	First, we define the \gls{fcsag} game in a perfect-information setting as a new resource allocation game, and show that this game does not possess a guaranteed \gls{pne}, but also no approximated \gls{pne} with a sufficiently small approximation factor.
	Second, we coordinate players by applying VCG pricing in both an offline and an online setting, accounting for weighted players. We extend the delayed VCG mechanism to a weighted delayed VCG mechanism, and implement a data-driven online assignment policy.
%
	Finally, we conduct extensive simulation-based experiments to analyze the benefits of our coordinated allocation mechanism.
	Our results show that players' coordination via the VCG mechanism can decrease the social cost on average by 52\% in an offline setting and by 42\% in an online setting when using our data-driven assignment policy.
	Our results further show that a player with a small share of drivers has a greater interest in participating in the VCG mechanism than a player with a larger share of drivers. A player's payoff relative to its number of drivers decreases when its share of managed drivers increases.
	
	

	\subsection{Organization}
	
	The remainder of this paper is structured as follows. Section~\ref{sec:problem_setting} details our \gls{fcsag} game setting and corresponding equilibrium analyses. In Section~\ref{sec:pricing}, we introduce the VCG mechanisms, which we apply in both offline and online settings. Section~\ref{sec:experimental_setting} describes our experimental design, while we discuss our numerical results in Section~\ref{sec:results}. Section~\ref{sec:conclusion} concludes this paper and provides an outlook for future research. For the sake of conciseness, we shift all proofs to the Appendix.

	\section{Problem Setting}\label{sec:problem_setting}
 	We focus on an non-cooperative offline resource allocation problem, where several navigation platforms aim to optimally assign their EV drivers to available charging stations, such that no visit conflict arises for their drivers and the total travel time for drivers to assigned stations is minimal. 
	In the following, we consider a perfect-information setting, in which each platform is aware of overall charging demand, and of accurately reported real-time charging station availabilities.
	Note that such a perfect-information setting is unrealistic in practice but allows us to analyze whether stable players' strategies exist at least in an idealized setting. 
%
%
%
%
	Each platform optimally assigns free stations to its navigated EV drivers, who accordingly drive to the stations to recharge their vehicles.
	To focus on the dynamics between the navigation platforms, we assume that electric drivers do not deviate from their assigned stations.
	In case of a conflicting assignment, i.e., if two or more platforms assign one of their drivers to the same station, the station availability is guaranteed for the driver with the earliest arrival time. The other assigned drivers fail their search, which induces a penalty cost for the respective platform.
	
	In the following, we formalize our problem setting as a resource allocation game, show that a \gls{pne} guarantee does not exist, and discuss the limits of refinement equilibrium concepts within this context.
	
	\subsection{Fleet Charging Station Allocation game}\label{subsec:game}

	We now define the \gls{fcsag} game, in which navigation service platforms constitute the players. A player's strategy is an assignment of charging stations to its drivers, while its payoff describes the cost of the assignment, corresponding to the sum of the time needed by each driver to travel to its assigned station. We assume that if multiple drivers are assigned to the same station, all non-closest drivers must pay an extra penalty cost that represents the discomfort for failing to find a free charging station.
	We further restrict the set of reachable stations for a driver by setting a maximal driving time, and assume a deterministic driver behavior as reasoned in the section's introduction.
	

%
	Formally, we consider a set of players  $\mathcal{N}$, a set of drivers $\mathcal{D}$, and a set of available charging stations $v \in \mathcal{V}$. We denote with $\mathcal{V}' = \mathcal{V} \bigcup \{v_0\}$ the set of charging stations extended with an artificial station $v_0$, which represents a non-feasible assignment of a driver to any physical station. Such non-feasible assignments may occur when the number of drivers exceeds the number of available stations.
	 We let $t_{k,v}$ be the driving distance for  driver $k$ to charging station $v$.
%
	Each player must serve a charging demand, that corresponds to a set of driver locations, denoted by $\mathcal{D}_i = \{k_i \in \mathcal{D} \} \;,\forall i \in \mathcal{N}$.
%
	A player's pure strategy $s_i$ is an allocation function of charging stations to drivers, with $s_i : \mathcal{D} \mapsto \mathcal{V}'$.
	We restrict the stations that a driver $k$ may visit to the stations reachable within $\bar{S}$ meters, i.e., $s_i(k) \in \{v : v \in \mathcal{V}', \gamma(k,v) \leq \bar{S}\}$, with $\gamma(k,v)$ being the distance from driver $k$ to station $v$.
	A player can assign a physical station at most once to its drivers, but can assign an artificial vertex $v_0$ as many times as needed, formally
	 \begin{equation}
	 \forall k,k' \in  \mathcal{D}_i, \;  (s_i(k) = s_i(k')) \land  (s_i(k) \neq v_0 ) \Rightarrow  (k = k')\; .
	 \end{equation} 
	 A player must further assign a real station to each of its drivers if possible, i.e., at most $|\mathcal{D}_i| - |\mathcal{V}|$ drivers can be assigned to the artificial station $v_0$.
	We let the strategy profile $s=(s_i)_{i\in \mathcal{N}}$ be the vector of strategies of all players, $s_{-i}$ be the vector of strategies for all players but $i$, and $S_i$ be the strategy space for player $i$.
	We denote with $u_i(s)$ the payoff for player $i$ when strategy profile $s$ is played, which corresponds to the assignment cost of $s_i$, resulting from the sum of its individual driver's payoff functions $c_i$, given the other competing players' strategies $s_{-i}$. Formally, we define $u_i(s)$ as
	\begin{equation}\label{eq:u_payoff}
	u_i(s)  = \sum_{k \in \mathcal{D}_i} c_i(k,s)\;,
	\end{equation}
	with 
	\begin{equation}\label{eq:cost_I}
	c_i(k,s) = 
	\begin{cases}
	\bar{\beta}, & \text{if } s_i(k) = v_0   \\
	t_{k,s_i(k)} , & \text{if } \forall k' \in \mathcal{D}_j \; \; \forall j \in \mathcal{N}, \;j\neq i \;  (s_j(k') = s_i(k)) \Rightarrow (t_{k',v} \geq t_{k,v})    \\
	t_{k,s_i(k)}  +\bar{\beta}& \text{otherwise } \\
	\end{cases}
	\end{equation}
	
	An individual driver $k$'s payoff $c_i(k,s)$ ensures that, in case of conflicting driver to station assignments, all but the earliest arriving drivers assigned to a station $v$ must pay a penalty cost $\bar{\beta}$.
	The penalty $\bar{\beta}$ represents the discomfort for failing the charging station search.
	Then, the goal of each player $i$ is to find a strategy $s_i$ that minimizes $u_i(s_i, s_{-i})$, given that the assignment cost depends on other players' selected strategies.

%
	We consider the pure-strategy profile  $s^*$ to be a \gls{pne} when it holds that
	\begin{equation} 
		\forall i\in \mathcal{N}, s^*_{i} \in \argmin_{s_i \in S_i} u_{i} (s_i, s^*_{-i})\; ,
	\end{equation}
	i.e., no player can unilaterally decrease its payoff by changing its strategy.  
%
	The sum of all players' payoffs defines the social cost. 
	
	From a system-perspective, i.e., in an idealized setting with collaborative players, the goal is to find a strategy profile that minimizes the social cost. We refer to the minimum total cost as the social optimum, defined as
	\begin{equation}\label{eq:soc_cost}
		S_g =  \min_{s \in \mathcal{S}} \sum_{i \in \mathcal{N}} \sum_{k \in \mathcal{D}_i} c_i(k,s)\;.	
	\end{equation}
	With $A_v$ being the assignment of drivers to station $v$, $s(k)$ being the station assigned to driver $k$, and strategy profile $s$, we can reformulate the social optimum as
	\begin{equation}\label{eq:soc_cost1}
	\begin{split}
		S_g&= \min_{s \in \mathcal{S}} \sum_{v \in \mathcal{V}} \sum_{k \in A_v} t_{k,s(k)} + (|A_v| -1 )\cdot \bar{\beta}\;.
	\end{split}
	\end{equation}

	\subsection{Equilibrium analysis}
	
	To analyze possible stable outcomes of the \gls{fcsag} game, we discuss the \gls{fcsag} game's equilibrium properties in the following.
	Here, we note that a Nash equilibrium corresponds to a strategy profile such that no player has an incentive to unilaterally deviate from its current strategy.
	Moreover, there always exists a Nash equilibrium for a strategy profile of mixed strategies, i.e., a profile that corresponds to a probability distribution over pure strategies.
	However, mixed strategies do not reflect the behavior of a player in practice, as a player assigns a set of physical stations to its drivers once, which makes the interpretation of mixed strategy Nash Equilibria difficult for real-world analyses.
	In contrast, a \gls{pne} allows for more interpretable outcomes but is not guaranteed to exist in the \gls{fcsag} game. 
%
	
	\begin{prop}\label{prop:noPNE}
		The \gls{fcsag} game does not always possess a \gls{pne}. 
	\end{prop}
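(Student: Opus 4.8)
The plan is to prove the statement by exhibiting a single explicit instance of the \gls{fcsag} game in which every pure-strategy profile admits a profitable unilateral deviation. Since the claim is a non-existence result (``does not \emph{always} possess''), one counterexample suffices, so I would first make the instance small enough that each player's feasible strategy set is finite and fully enumerable, reducing the game to a normal form that can be checked by hand.

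Concretely, I would take two players, $\mathcal{N}=\{1,2\}$, with $\mathcal{D}_1=\{a,b\}$ and $\mathcal{D}_2=\{c,d\}$, and exactly two real stations $\mathcal{V}=\{A,B\}$, all within reach $\bar{S}$ of every driver. The key observation that tames the strategy space is that here $|\mathcal{D}_i|=|\mathcal{V}|=2$, so the feasibility rule that at most $|\mathcal{D}_i|-|\mathcal{V}|$ drivers may be sent to $v_0$ forbids the artificial station altogether; combined with the rule that a player may use each physical station at most once, each player is forced to play one of the two perfect matchings of its drivers to $\{A,B\}$. Thus $S_1=\{(a{\to}A,b{\to}B),(b{\to}A,a{\to}B)\}$ and analogously for player $2$, and the game collapses to a $2\times2$ matrix.

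For the travel times I would place all four drivers in the same closeness order $a\prec c\prec b\prec d$ at \emph{both} stations --- e.g.\ $t_{\cdot,A}=(1,\,2,\,1.5,\,2.5)$ and $t_{\cdot,B}=(5,\,6,\,5.5,\,6.5)$ listed for $(a,b,c,d)$ --- and fix a penalty such as $\bar{\beta}=10$. With these numbers each station is won by its closest assigned driver, so evaluating the four profiles via \eqref{eq:u_payoff}--\eqref{eq:cost_I} is a short calculation, and one obtains a best-response structure isomorphic to matching pennies: player $1$ strictly prefers $(a{\to}A,b{\to}B)$ against one of player $2$'s matchings but the opposite matching against the other, while player $2$'s best response flips symmetrically. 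Checking the four profiles then shows each is unstable, the best responses cycle, and no \gls{pne} exists. Intuitively, player $2$'s middle driver $c$ chases player $1$'s weak driver $b$ across the two stations while $b$ flees toward player $2$'s weak driver $d$, producing the cyclic pursuit.

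The main obstacle is the \emph{design} of the instance rather than its verification. The natural first attempts hand one player a dominant matching --- for example, if a player's two drivers have opposite station preferences, one matching simply seats each at its preferred station --- which collapses the cycle into a stable profile. The trick is to arrange the distances so that, under the forced bijection induced by $|\mathcal{D}_i|=|\mathcal{V}|$, \emph{neither} player has a dominant matching and each best response genuinely flips with the opponent's choice; once that is achieved, the remaining enumeration of the four cells is routine. I would close by noting that a mixed equilibrium still exists, consistent with the discussion preceding the proposition, but that it lacks an operational interpretation, which is precisely why the failure of pure equilibria is the relevant conclusion here.
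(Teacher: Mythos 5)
Your proposal is correct, and it takes a genuinely different route from the paper. I verified your instance: with $\bar{\beta}=10$ and your distances, the four profiles give payoffs $(u_1,u_2)$ of $(7,28)$ when the matchings pair $a$ with $c$ and $b$ with $d$ at the same stations is avoided --- concretely $(a{\to}A,b{\to}B)$ vs.\ $(c{\to}A,d{\to}B)$ and $(a{\to}B,b{\to}A)$ vs.\ $(c{\to}B,d{\to}A)$ both yield $(7,28)$, while the two cross profiles yield $(17,18)$ --- so player $1$ strictly prefers the ``diagonal'' cells and player $2$ the ``off-diagonal'' ones, which is exactly matching pennies and rules out a \gls{pne} by exhaustive inspection of all four profiles. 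The paper instead uses an asymmetric instance with three stations and only three drivers (one player controls two drivers, the other one), and argues via best-response dynamics: it exhibits an explicit improvement cycle of length five and then claims every other profile reaches this cycle within two best-response moves, so the best-response graph has no sink. Your construction buys a cleaner and more self-contained verification: the feasibility constraint that at most $|\mathcal{D}_i|-|\mathcal{V}|$ drivers go to $v_0$, combined with $|\mathcal{D}_i|=|\mathcal{V}|=2$, collapses each strategy set to exactly two perfect matchings, so the whole game is a $2\times 2$ normal form and no ``all other profiles funnel into the cycle'' argument is needed; this is the part of the paper's proof that is stated rather than checked, since its strategy space has $6\times 3$ profiles. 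The paper's construction buys something too: it shows non-existence with a minimal amount of demand (three drivers) and with heterogeneous players, and it directly displays the cycling of best-response dynamics, which foreshadows the $\rho$-equilibrium analysis in Proposition~2. Your closing remark that a mixed equilibrium still exists but lacks operational meaning matches the paper's framing exactly.
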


	\begin{proof}
	We refer to Appendix~\ref{app:proofs} for a proof of Proposition \ref{prop:noPNE}.
	\end{proof}


	
	For games with non-guaranteed \gls{pne} existence, $\rho$-equilibria with pure strategies constitute an interesting alternative to interpret the outcome of a game.
%
	Specifically, a $\rho$-equilibrium corresponds to a near-stable state in which each player cannot decrease its payoff by more than the absolute factor $\rho\geq0$. In practice, such an equilibrium corresponds to each player accepting to slightly deviate from the best solution it may obtain if its deviation stabilizes the game outcome.
	Formally, this corresponds to a profile $s$, such that 
	\begin{equation}
		\forall i \in \mathcal{N}, \forall s'_i \in \mathcal{S}_i\;\;, u_i(s'_i, s_{-i}) > u_i(s) - \rho\;.
	\end{equation}
	However, we can assume that such equilibrium refinement is only of practical interest if a player's payoff marginally increases when the player deviates from its best response, i.e., if $\rho$ is sufficiently small.
	In Proposition~\ref{prop:approxNE}, we show that the existence of an equilibrium with $\rho$ being sufficiently small cannot be guaranteed, as we can construct instances in which for any set of strategy profiles at least one player can significantly decrease its payoff by deviating from its current strategy.
%
	\begin{prop}\label{prop:approxNE}
		The existence of a $\rho$-\gls{pne} with $\rho <  \bar{\beta} - \Delta_t$, with $\Delta_t= t_{\text{max}} -  t_{\text{min} }$ being the difference between the largest and the lowest driving times between a driver and a station, cannot be guaranteed for the \gls{fcsag} game.
	\end{prop}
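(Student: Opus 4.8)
The plan is to exhibit a single \gls{fcsag} instance in which, for \emph{every} pure-strategy profile, at least one player possesses an improving deviation whose payoff gain is at least $\bar\beta - \Delta_t$. Since a profile $s$ qualifies as a $\rho$-\gls{pne} only if no player can decrease its payoff by more than $\rho$, the presence of such a deviation in every profile forces any $\rho$ that admits an equilibrium to satisfy $\rho \ge \bar\beta - \Delta_t$, which is exactly the contrapositive of the claim.

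First I would pin down the payoff gain of a single penalty-escaping deviation. Observe from \eqref{eq:cost_I} that a driver $k$ that loses a conflict contributes $t_{k,v} + \bar\beta$ to its player's payoff, whereas reassigning $k$ to any conflict-free real station $v'$ makes its contribution $t_{k,v'}$. The induced change in the player's payoff is therefore $(t_{k,v'} - t_{k,v}) - \bar\beta$. Because all driving times lie in $[t_{\text{min}}, t_{\text{max}}]$, we have $t_{k,v'} - t_{k,v} \le t_{\text{max}} - t_{\text{min}} = \Delta_t$, so the deviation decreases the player's payoff by at least $\bar\beta - \Delta_t$. This inequality is the quantitative core of the argument: any escape from a losing conflict is a ``large'' improving move.

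Second, I would build the instance so that this penalty escape is always available and is the only improving move. Reusing the cyclic no-\gls{pne} construction underlying Proposition~\ref{prop:noPNE}, I would arrange a small set of players and stations in which the best responses chase one another around a cycle: whenever one player secures its preferred station as the closest driver, it forces a competitor into a losing conflict (penalty $\bar\beta$), and that competitor can always reassign the penalized driver to a reachable conflict-free station. The key tuning is to confine the driving times $t_{k,v}$ to a narrow band of width $\Delta_t$ while keeping $\bar\beta$ large relative to $\Delta_t$, so that (i) no player can profit merely by moving an already conflict-free driver to a marginally closer station, and (ii) every profile contains at least one penalized driver able to escape. Enumerating the finitely many profiles of this minimal instance then verifies that an improving penalty escape is present in each of them.

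The main obstacle is this second step: certifying that in the constructed instance the \emph{only} improving deviations are penalty escapes, so that no profile can be stabilized by a cheap, sub-$(\bar\beta-\Delta_t)$ adjustment. I would address this by keeping the instance as small as possible and the travel-time spread $\Delta_t$ tight enough that any reassignment of a conflict-free driver is non-improving, leaving the penalty escape of step one as the unique source of improvement. Combining the per-profile existence of such an escape with its guaranteed gain of at least $\bar\beta - \Delta_t$ then yields that no $\rho$-\gls{pne} with $\rho < \bar\beta - \Delta_t$ can exist for the \gls{fcsag} game.
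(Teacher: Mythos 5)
Your opening estimate is correct and is exactly the quantitative core of the paper's own argument: escaping a lost conflict removes a penalty $\bar{\beta}$ at the cost of at most $t_{\text{max}}-t_{\text{min}}=\Delta_t$ of extra travel, hence improves the deviating player's payoff by at least $\bar{\beta}-\Delta_t$. The paper packages the same idea differently, by passing to an auxiliary game in which travel times are set to zero (so every strict improvement is worth a multiple of $\bar{\beta}$) and then restoring travel times at a loss of at most $\Delta_t$; your plan to work directly in the \gls{fcsag} game would be a legitimate alternative route \emph{if} the instance you invoke existed.

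That is where the genuine gap lies. You propose to reuse the cyclic construction of Proposition~\ref{prop:noPNE} (Figure~\ref{fig:game}), but that instance provably cannot serve, and no tuning of travel times repairs it, because the obstruction is combinatorial rather than numerical: with three drivers, three stations, and full reachability, the conflict-free profile $(d_1\to c,\, d_2\to a,\, e_1\to b)$ is feasible. In it no driver is penalized, player $p_1$ has no improving deviation at all (its alternatives cost $18$, $13+\bar{\beta}$, $17+\bar{\beta}$, $18+\bar{\beta}$, or $23+\bar{\beta}$ against its current $17$), and player $p_2$'s unique improving deviation (sending $e_1$ to $c$, where it beats $d_1$) improves its payoff by exactly $1$. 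Hence this profile is a $\rho$-\gls{pne} for every $\rho>1$, in particular for $\rho$ far below $\bar{\beta}-\Delta_t$, so the instance does not witness the proposition; your condition (ii) -- ``every profile contains a penalized driver able to escape'' -- simply fails there, and it must fail in any instance where the drivers can be spread conflict-free over reachable stations, since in your regime $\bar{\beta}\gg\Delta_t$ such a profile admits no large improvement. (Relatedly, your stated ``main obstacle'' is misdirected: you do not need penalty escapes to be the \emph{only} improving moves -- cheap improvements elsewhere are harmless -- you need every profile, including conflict-free ones, to admit some improvement of size at least $\bar{\beta}-\Delta_t$.) What the proof actually requires is an instance in which conflicts are unavoidable in \emph{every} feasible profile (more drivers than stations, or reachability restrictions) while every profile still leaves some player able to strictly reduce its number of penalized drivers. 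This unavoidable-but-always-escapable conflict structure is precisely what the paper isolates as the statement ``the zero-travel-time game G has no \gls{pne}''; it is the nontrivial ingredient of the proof, and your proposal asserts it (``that competitor can always reassign the penalized driver to a reachable conflict-free station'') rather than constructs it, while the one concrete instance you point to demonstrably lacks it.
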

	
	\begin{proof}
		We refer to Appendix~\ref{app:proofs} for a proof of Proposition \ref{prop:approxNE}.
	\end{proof}
	\section{Charging station allocation mechanism}\label{sec:pricing}

In Section~\ref{sec:problem_setting}, we show that no pure \gls{pne} guarantee exists for the \gls{fcsag} game, which motivates our interest to better coordinate the different players via \gls{md} to steer the system towards more socially desirable outcomes.
To realize such a mechanism, we consider the existence of a \gls{wlo} who decides on the system-optimal allocation of stations to drivers across several players, considering station preferences reported by all players. 
%
%
In practice, the \gls{wlo} might be an inter-charge operator or a municipality that allocates, e.g., via a charging slot booking system, a subset of stations to the navigation platforms, who then assign the stations to their drivers.	

Formally, we consider $m=|\mathcal{V}|$ charging stations  constituting $m$ indivisible resources to be concurrently allocated among $|\mathcal{N}|$ players. Each player has a cost valuation for each bundle of stations, that corresponds to the minimum cost assignment between its drivers and the bundle's stations.
Valuations are non-additive, e.g., if a bundle contains more stations than drivers, removing the non-assigned station from the bundle preserves the valuation. While the valuation is expressed in units of time in the following, it can be transposed to monetary units in practice. 
%
%
\begin{observation}
	As the principal already knows where stations are located, it can compute a player's cost valuation if it also knows where a player's drivers are located. 
\end{observation}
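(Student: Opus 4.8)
The plan is to show that a player's cost valuation is a deterministic function of exactly two inputs---the station locations and that player's driver locations---so that the principal, already knowing the former, can reconstruct the valuation once it learns the latter, with no further information required from the player.

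First, I would unpack the definition of the cost valuation. For a bundle $B \subseteq \mathcal{V}$, the valuation is the minimum-cost assignment of the player's drivers $\mathcal{D}_i$ to the stations in $B$, subject to each physical station being used at most once and each driver $k$ either being routed to a reachable station (i.e.\ $\gamma(k,v) \le \bar{S}$) or sent to the artificial station $v_0$. The objective being minimized is the sum of the driving times $t_{k,v}$ over drivers assigned to physical stations, plus the penalty $\bar{\beta}$ for every driver assigned to $v_0$. This is a minimum-cost bipartite assignment (equivalently, a min-cost flow) over drivers and the stations in $B$.

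Second, I would observe that every quantity entering this optimization is pinned down by the station and driver locations alone. The driving time $t_{k,v}$ and the reachability condition $\gamma(k,v) \le \bar{S}$ depend only on the location of driver $k$ and station $v$ through the underlying road network, which the principal is assumed to know; the penalty $\bar{\beta}$ is a fixed global constant. The step I would emphasize as the only non-automatic one is that the conflict penalty appearing in the game payoff $c_i(k,s)$ plays no role in the valuation: the principal allocates \emph{disjoint} bundles, so within the valuation the bundle $B$ is treated as exclusively the player's, no two drivers of distinct players contend for the same station, and the valuation is therefore independent of the other players' drivers and of any strategy profile $s_{-i}$.

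Finally, having established that both the feasible region and the objective of the assignment problem are fully determined by the known data, I would conclude that the principal can instantiate and solve this minimum-cost assignment to recover the exact valuation for every bundle. The main obstacle here is conceptual rather than computational: one must verify that the valuation carries no information beyond what is encoded in the locations---in particular, that it does not covertly depend on the player's private strategy choice or on cross-player conflicts---so that knowledge of the driver locations is genuinely sufficient for the principal to reproduce it.
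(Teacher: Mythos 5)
Your proposal is correct and matches the paper's (implicit) justification: the observation follows directly from the definition of the valuation in Equations~\ref{eq:cost_valuation}--\ref{eq:cost_VCG}, where the assignment cost $c(k,a)$ is either $t_{k,a(k)}$ or the fixed penalty $\bar{\beta}$, both fully determined by station and driver locations, so the principal can solve the resulting minimum-cost assignment itself. Your added remark that the conflict penalty of the game payoff $c_i(k,s)$ is absent from the valuation (since the principal allocates each physical station at most once across all players) is exactly the point the paper encodes by defining $c(k,a)$ without a conflict case.
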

Accordingly, it suffices that a player reports the locations of its drivers to the principal rather than explicitly communicating its cost valuation for every bundle. 
%
\begin{observation}
	Reporting drivers' locations is linear in the number of drivers owned by the player. 
\end{observation}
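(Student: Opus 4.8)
The plan is to establish the claim by a direct counting argument on the size of a player's report, once a suitable communication model is fixed. First I would make precise what is being transmitted: player $i$ sends its private information to the principal, and I take the atomic unit of information to be the location of a single driver (a point in the road network, equivalently a node identifier). Under this convention, proving linearity reduces to counting how many such units player $i$ must send.

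The key steps are then immediate. By the preceding observation, the principal can reconstruct player $i$'s cost valuation for every bundle from the reported driver locations together with the station locations it already holds, so it suffices for player $i$ to report only the locations of the drivers in $\mathcal{D}_i$ and no bundle valuation needs to be communicated explicitly. Since each driver is associated with exactly one location, the full report consists of precisely $|\mathcal{D}_i|$ location descriptors, each of constant size independent of $|\mathcal{D}_i|$; hence the total communicated information is $\Theta(|\mathcal{D}_i|)$, i.e., linear in the number of drivers owned by player $i$. To underline why this matters, I would contrast it with explicit valuation reporting: a valuation ranges over all subsets of the $m=|\mathcal{V}|$ stations, of which there are $2^{|\mathcal{V}|}$, so communicating the valuation directly is exponential in the number of stations, whereas reporting driver locations replaces this by linear communication.

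The hard part here is not analytical but definitional: the linearity statement only acquires meaning once we agree to treat one driver location as a constant-size reporting unit. Accordingly, the substance of the argument lies in committing to this communication model; with that fixed, linearity in $|\mathcal{D}_i|$ follows at once from the fact that the report carries exactly one entry per driver.
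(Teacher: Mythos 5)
Your proposal is correct and matches the paper's reasoning: the paper states this observation without a formal proof, treating it as immediate from the fact that a player transmits exactly one constant-size location per driver in $\mathcal{D}_i$, and the surrounding text draws the same contrast you do with the exponential cost of reporting valuations for all $2^{|\mathcal{V}|}$ station bundles. Your only addition is making the communication model (one location = one constant-size unit) explicit, which is a harmless formalization of what the paper leaves implicit.
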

Accordingly, reporting drivers' locations requires much less information-sharing between a player and the principal, than if a player would report its cost valuation for any possible combination of stations.
%


%
The principal computes the allocation of stations to drivers that minimizes the sum of all players' bundle cost valuation and requires in return that each player pays a price for the station assignment. Each player aims to minimize its payoff, which corresponds to the sum of its cost valuation for the received station allocation and the price decided by the principal.
%
We assume that players may lie about their preferences if lying can decrease their payoff compared to truthfully reporting their preferences.
The goal of the principal is then to design a pricing rule, such that all players have an incentive to truthfully report their drivers' locations.
To ensure players' truth-telling behavior, we apply the well known VCG pricing scheme which ensures that it is in a player's best interest to reveal its true preference.

%

In the following, we first analyze the VCG pricing scheme from an offline perspective in Section~\ref{subsec:off_demand}, to derive an upper bound on the system efficiency improvement that can be reached.
We then study how to design an online VCG mechanism for a practical implementation in Section~\ref{subsec:on_demand}.

\subsection{Offline charging demand}\label{subsec:off_demand}
	

	We now develop an offline mechanism to ensure truthful reporting of the players' information, i.e., the locations of their drivers. 
	Focusing on an offline setting, we assume all players' information to be simultaneously reported. The principal allocates stations based on the revealed information, and each player accordingly receives a subset of stations and the corresponding drivers assignment for each of these stations.
	In Section~\ref{subsec:offvcg_unweighted}, we discuss the mechanism for unweighted players, whereas we discuss how to account for weighted players in Section~\ref{subsec:offvcg_weighted}.
	
%
	
	\subsubsection{Unweighted VCG Mechanism}\label{subsec:offvcg_unweighted}
	In the following, we adapt the VCG mechanism for payoff minimization and social cost minimization. 
%
	We consider a setting in which each player $i \in \mathcal{N}$ has some hidden information $\theta_i$, representing the set of its drivers' locations that it chooses to accurately reveal or not. We denote with $\theta = (\theta_i)_{i \in \mathcal{N}}$, the vector that contains all players' information. 
%
%
	Let $a \in A$ be the set of alternatives that correspond to assignments of stations to drivers, which the principal computes and communicates to the players.
	 Here, a real station can be assigned only once to a driver and a driver may only be assigned to a station if the driving distance is less than its maximum search radius $\bar{S}$. The artificial station vertex $v_0$ can be assigned to multiple drivers and induces a penalty cost for each assigned driver.
	
	We introduce the function $f(\theta)$ that minimizes the sum of all players' valuations based on their reported information $\theta$, and refer to it as the social choice function
	\begin{equation}
	f(\theta_1, ..., \theta_n) = \argmin_{a \in A} \sum_{i} v_i(\theta_{i}, a)\;.
	\end{equation}
	The valuation  $v_i(\theta_{i}, a)$ of player $i$ for alternative $a$ corresponds to the sum of assignment costs of drivers to stations given the player's information $\theta_{i}$, formally
	\begin{equation}\label{eq:cost_valuation}
	v_i(\theta_{i}, a)  = \sum_{k \in \mathcal{D}_i} c(k,a)\;,
	\end{equation}
	with $c(k,a)$ being the cost of assigning driver $k$ to station $a(k)$
	\begin{equation}\label{eq:cost_VCG}
	c(k,a) = 
	\begin{cases}
	\bar{\beta}, & \text{if } a(k) = v_0   \\
	t_{k,a(k)} , & \text{else }\;. \\
	\end{cases}
	\end{equation}
	\noindent Within an allocation $a$, each driver $k \in \mathcal{D}$ receives an assigned station $a(k) \in \mathcal{V} \bigcup {v_0}$. Using Equation~\ref{eq:cost_valuation}, we can further express the social choice function as 
	\begin{equation}
	f(\theta_1, ..., \theta_n) = \argmin_{a \in A} \sum_{k \in \mathcal{D}} c(k,a) \;.
	\end{equation}	 
	Note that the social choice function minimizes the social cost as defined in Equation~\ref{eq:soc_cost1}, with the additional constraint that no more than one driver can be assigned to each physical station.

	
	Adapting the VCG mechanisms for payoff minimization leads to non-positive VCG prices, which implies a positive transfer of money from the player to the system.
	Accordingly, the VCG pricing rule is given by
	\begin{equation}\label{eq:vcg_prices}
	p(\theta_{-i}) =  h_{-i} - \sum_{j\neq i} v_j(\theta_{j}, a) \;,
	\end{equation}
	with 
	\begin{equation}
	h_{-i} =  \min_{b \in A} \sum_{j\neq i} v_j(\theta_{j}, b) \;,
	\end{equation}
	and with $a=f(\theta)$, such that each player's payoff function results to
	\begin{equation}\label{eq:payoff}
	u_i = v_i(\theta_{i}, a) - p(\theta_{-i}) = \sum_{j \in \mathcal{N} } v_j(\theta_{j}, a) -	h_{-i}  \;.
	\end{equation}
	%
	Using VCG prices, the principal aligns each player's interest with the overall system interest as the term $\sum_{j \neq i} v_j(\theta_{j}, a)$ in Equation~\ref{eq:payoff} aligns the player payoff with the social cost (cf. Equation~\ref{eq:soc_cost1}). Thus, a player minimizes its payoff when telling the truth, i.e., the pricing scheme guarantees incentive-compatibility \cite[cf.][]{NisanRoughgardenEtAl2007}.
%
%
	VCG prices further ensure that a player does not have an incentive to assign different drivers to each of the stations contained in its received subset (see~Proposition~\ref{prop:no_deviation}).
	\begin{prop}\label{prop:no_deviation}
		A Player has no incentive to deviate from the drivers assignment to stations prescribed  by the allocation $a$ chosen by the principal, when telling the truth.
	\end{prop}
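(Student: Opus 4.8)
The plan is to reduce the statement to the global optimality of the allocation $a=f(\theta)$. First I would note that, by Equation~\ref{eq:payoff}, a truthful player $i$'s payoff is $u_i = v_i(\theta_{i},a) - p(\theta_{-i})$, and that both $p(\theta_{-i})$ and $h_{-i}$ in Equation~\ref{eq:vcg_prices} depend only on the other players' reports $\theta_{-i}$. Consequently, the price is a constant from player $i$'s viewpoint, and minimizing $u_i$ is equivalent to minimizing its own valuation $v_i(\theta_{i},a)$. Hence it suffices to show that the driver-to-station assignment prescribed by $a$ already minimizes $v_i(\theta_{i},\cdot)$ among all re-assignments of $i$'s drivers that the player could actually carry out.

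Next I would make precise the set of admissible deviations. Let $V_i\subseteq\mathcal{V}$ be the physical stations that $a$ assigns to drivers in $\mathcal{D}_i$; the player receives exactly this bundle together with the always-available artificial station $v_0$. A deviation is then any feasible map $\sigma:\mathcal{D}_i\to V_i\cup\{v_0\}$ respecting the game's feasibility rules, i.e., each physical station is used at most once and $\gamma(k,\sigma(k))\le\bar{S}$ for every driver $k$. I denote by $a'$ the global allocation obtained from $a$ by replacing $i$'s assignment with $\sigma$ while leaving every other player's assignment untouched.

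The core step is a decomposition-plus-contradiction argument. Since the cost in Equation~\ref{eq:cost_valuation} is separable over drivers and each $v_j(\theta_{j},a')$ for $j\neq i$ depends only on the (unchanged) assignment of player $j$'s drivers, I obtain $\sum_{j}v_j(\theta_{j},a')-\sum_{j}v_j(\theta_{j},a)=v_i(\theta_{i},a')-v_i(\theta_{i},a)$. If some admissible $\sigma$ satisfied $v_i(\theta_{i},a')<v_i(\theta_{i},a)$, then $a'$ would be a globally feasible allocation of strictly smaller total cost than $a=f(\theta)$, contradicting the definition of the social choice function as the minimizer of $\sum_{j}v_j(\theta_{j},\cdot)$. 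Therefore no admissible deviation strictly lowers $v_i$, and by the reduction established above none lowers $u_i$, which is exactly the claim.

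The main obstacle is verifying that $a'$ is genuinely globally feasible, so that the contradiction with the optimality of $f(\theta)$ is legitimate. This reduces to two observations: first, because $a$ assigns each physical station to at most one driver, the bundles handed to distinct players are disjoint, so re-permuting $i$'s drivers within $V_i$ cannot collide with any station used by another player; second, the reachability and single-use constraints are built into the definition of the admissible deviation set, so they transfer directly to $a'$. The remaining care points are minor: assignments to $v_0$ are always permitted, so overflow deviations are covered by the same argument, and in the presence of ties in $f(\theta)$ the player is merely indifferent, which still yields no strict incentive to deviate.
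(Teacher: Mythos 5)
Your proof is correct and follows essentially the same route as the paper's: you fix the price term (the paper equivalently writes the truthful payoff as $\sum_{k \in \mathcal{D}} c(k,a) - h_{-i}$), observe that the other players' valuations are unchanged under the deviation $a'$, and invoke the global optimality of $a = f(\theta)$ to conclude $\sum_{k}c(k,a') \geq \sum_{k}c(k,a)$, hence no within-bundle reassignment can lower $v_i$ or $u_i$. Your explicit feasibility check for $a'$ (disjoint bundles, reachability, single use, $v_0$ overflow) spells out a step the paper leaves implicit, but the core argument is identical.
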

	\begin{proof}
		We refer to Appendix~\ref{app:proofs} for a proof of Proposition \ref{prop:no_deviation}.
	\end{proof}
	 Moreover, VCG prices ensure that truth-telling minimizes a player's payoff even when it has the possibility to deviate from the received assignment (see~Proposition~\ref{prop:no_lie}).
%
	A player that misreports information on its drivers may receive a suboptimal stations assignment, which it may improve by modifying the assignment of its drivers to the stations contained in the received subset of stations. However, the resulting assignment will still yield higher payoff than when truthfully reporting its information and not deviating from the prescribed assignment.
	\begin{prop}\label{prop:no_lie}
		A Player has no incentive to lie on its revealed information, even when it can deviate from the realized assignment by the principal
	\end{prop}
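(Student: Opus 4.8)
The plan is to reduce the combined manipulation---reporting false driver locations and then re-optimizing the within-bundle assignment---to a single feasible global allocation and then invoke optimality of the social choice function $f$, exactly in the spirit of the classical VCG argument. First I would fix player $i$'s true information $\theta_i$ and write $u_i^{\mathrm{T}}$ for its payoff under truthful reporting with the prescribed assignment $a^{*}=f(\theta_i,\theta_{-i})$; by Equation~\ref{eq:payoff} this equals $\sum_{j\in\mathcal N} v_j(\theta_j,a^{*})-h_{-i}$. Then I would let $\hat\theta_i$ be any misreport, $\hat a=f(\hat\theta_i,\theta_{-i})$ the allocation the principal computes from the reports, and $B_i$ the bundle of physical stations that $\hat a$ assigns to player $i$'s drivers. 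A deviation is then any reassignment of player $i$'s drivers among the stations of $B_i$ (and $v_0$); I denote by $\tilde a$ the global allocation that coincides with $\hat a$ on every other player but reassigns player $i$'s drivers this way.

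The core observation is that the price is frozen once the principal commits to $\hat a$: it is $p(\theta_{-i})=h_{-i}-\sum_{j\neq i}v_j(\theta_j,\hat a)$ and therefore does not change when player $i$ subsequently re-optimizes within $B_i$. Hence the payoff under lie-plus-deviation is $u_i^{\mathrm{L}}=v_i(\theta_i,\tilde a)-p(\theta_{-i})=v_i(\theta_i,\tilde a)+\sum_{j\neq i}v_j(\theta_j,\hat a)-h_{-i}$, now evaluated with player $i$'s \emph{true} valuation $v_i(\theta_i,\cdot)$. Because $\tilde a$ and $\hat a$ agree on all players but $i$, I can replace $\sum_{j\neq i}v_j(\theta_j,\hat a)$ by $\sum_{j\neq i}v_j(\theta_j,\tilde a)$, which collapses the payoff to $u_i^{\mathrm{L}}=\sum_{j\in\mathcal N}v_j(\theta_j,\tilde a)-h_{-i}$. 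Since $a^{*}=f(\theta)$ minimizes $\sum_{j}v_j(\theta_j,\cdot)$ over the feasible set $A$ and $\tilde a\in A$, it follows that $\sum_j v_j(\theta_j,a^{*})\le \sum_j v_j(\theta_j,\tilde a)$, i.e.\ $u_i^{\mathrm{T}}\le u_i^{\mathrm{L}}$, which is the claim.

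I expect the main obstacle to be the feasibility bookkeeping that makes this clean inequality valid, rather than the algebra itself. Specifically, I must argue that $\tilde a$ is an element of $A$: the reassignment uses only stations in $B_i$, which $\hat a$ allocated exclusively to player $i$'s drivers, so no physical station is used more than once globally, and I must confine the admissible deviations to reassignments that respect the search-radius constraint $\gamma(k,v)\le \bar{S}$ and the $v_0$ rules, so that $\tilde a$ is genuinely a competitor in the argmin defining $f$. The second point to state carefully is that $h_{-i}$ depends only on $\theta_{-i}$ by definition and is therefore invariant to both the misreport $\hat\theta_i$ and the deviation, which is precisely what lets the two payoffs be compared through the common term $-h_{-i}$. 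Finally I would note that setting $\hat\theta_i=\theta_i$ recovers Proposition~\ref{prop:no_deviation} as the special case without misreporting, so the present statement is the strict generalization to joint misreport-and-deviate manipulations.
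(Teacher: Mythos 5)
Your proof is correct, and its crux---the deviated allocation $\tilde a$ is a feasible element of $A$, the price $h_{-i}-\sum_{j\neq i}v_j(\theta_j,\hat a)$ is unaffected by the within-bundle reassignment, and the optimality of $f(\theta)$ over $A$ then gives $\sum_j v_j(\theta_j,a^{*})\leq\sum_j v_j(\theta_j,\tilde a)$---is the same inequality the paper invokes in the concluding step of its own proof. The organization, however, genuinely differs. The paper proceeds in two stages: it first shows that lying \emph{without} deviating cannot help, splitting into the case of a correctly reported number of drivers (misreported locations only) and the case of an incorrectly reported number (unreported drivers incur the penalty $\bar\beta$; fictitious drivers shrink the set of stations available to the true drivers), and only afterwards composes this with the deviation step. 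You instead prove the combined lie-and-deviate statement in a single inequality chain that treats all misreports uniformly: the misreport enters only through the bundle $B_i$ that $\hat a$ hands to player $i$, so under- and over-reporting of driver counts need no separate treatment (under-reporting simply forces some true drivers onto $v_0$ inside $\tilde a$, which remains feasible since the alternatives set $A$ places no cap on $v_0$ assignments). Your version buys uniformity and rigor---it makes explicit the feasibility bookkeeping (the stations of $B_i$ are used by player $i$ alone, the search-radius constraint is preserved, and $h_{-i}$ depends only on $\theta_{-i}$) that the paper leaves implicit, and it recovers both the paper's first stage (take $\tilde a=\hat a$) and Proposition~\ref{prop:no_deviation} (take $\hat\theta_i=\theta_i$) as special cases. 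The paper's case analysis, in exchange, spells out concretely \emph{how} each type of misreport degrades the allocation, which your argument absorbs silently.
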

	\begin{proof}
		We refer to Appendix~\ref{app:proofs} for a proof of Proposition \ref{prop:no_lie}.
	\end{proof}

	\subsubsection{Weighted VCG Mechanism}\label{subsec:offvcg_weighted}					
	
	The presented VCG mechanism leads to a minimal social cost but does not ensure solution fairness for a single player, e.g., it may lead to an outcome with one player having none of its drivers assigned at all, while another player may have all of its drivers assigned. 
	Accordingly, a player might be better off in some scenarios by not participating in the system-based allocation unless it gets prioritized.
	To mitigate these issues, we now study a VCG mechanism, in which players are weighted to better balance the resulting individual player's assignments.  We note that a VCG mechanism with weighted players guarantees truthfulness \citep[cf.][]{Roberts1979}.
	
%
	To formalize this setting, we detail
	\begin{equation}
	f(\theta_1, ..., \theta_n) = \argmin_{a \in A}  \sum_{i} w_{i} \cdot v_i(a)\;,
	\end{equation}
	with $w_i$ being the weight associated to player $i$. 
	
	In this setting, the weighted pricing rule is
	\begin{equation}
		p^{\text{w}}(\theta_{-i}) = h_{-i} - \sum_{j\neq i} \frac{w_j}{w_i} v_j(\theta_{j}, a) \;,
	\end{equation}
	with $h_{-i}$ being a cost independent of the allocation obtained by $i$, and defined as
	\begin{equation}
		h_{-i} = \min_{b \in A} \sum_{j\neq i} \frac{w_j}{w_i}v_j(\theta_{i}, b) \;
	\end{equation}	
	to ensure positive money transfer from the players to the system.

	\paragraph{Weights optimization:}
	Assuming a weighted players VCG mechanism, the weights can be defined a-priori based on the players' characteristics, e.g., the number of controlled drivers.
	We can also optimally derive weight values such that each player benefits from participating in the centrally optimized system.
%
	Deciding whether to participate in the system or not relates to a game, where each player can \{opt-in, opt-out\} of the system. By opting in, the player receives the VCG allocation and pays the related price, whereas by opting out, the player selfishly assigns stations to drivers, possibly conflicting with other players' assignments.
%
%
	We assume that if one player opts out, then all players are forced to opt out and selfishly assign their charging demand, as the noise created by the opted-out participant prevents computing an optimal system assignment for the remaining participants.
	Thus, a player has only an incentive to opt in if its resulting payoff with VCG is lower than its payoff without VCG.
	Accordingly, our goal is to derive weights such that all players profit from participating in the VCG mechanism.


	The resulting weights optimization problem can be formalized as follows: let $b=(b_i)_{i\in \mathcal{N}}$ be the selfish strategy profile for all players, and let $u_i(b)$ be the respective payoffs for each player $i$ with strategy profile $b_i$. 
	As we seek to derive the best possible weights, we adapt the pivot rule such that the minimization part of $h_{-i}$ for player $i$ is independent of the a-priori undefined weights for all players but $i$, and let 
	\begin{equation}\label{eq:weight_prices}
	h_{-i} = \min_{b \in A} \sum_{j\neq i} \frac{1}{w_i} v_j(\theta_{i}, b)\;.
	\end{equation}	
	Equation~\ref{eq:weight_prices} still guarantees the price to be non-positive such that the mechanism does not transfer a positive amount of money to its players.
%
	We denote with  $\text{OPT}_{-i}$ the social optimum realized when $i$ does not exist, as 
	\begin{equation} 
	\text{OPT}_{-i} = 	\min_{b \in A} \sum_{j\neq i} v_j(\theta_{i}, b)\;.
	\end{equation}
	We let $w=(w_i)_{i\in \mathcal{N}}$, with $1 \leq  w_i \leq  W$, and $W$ being the maximal weight value; and define the weights optimization problem as
	

\begin{equation}\label{eq:obj}
\min_{x \in \mathcal{A}, w} \sum_{i\in\mathcal{N}} w_i \cdot \sum_{k\in \mathcal{D}_i} \sum_{v\in\mathcal{V} \cup \mathcal{V}_0}  x_{kv} t_{kv}
\end{equation}
\begin{equation}\label{eq:assign1}
\sum_{k \in \mathcal{D}} x_{kv} \leq 1 \; \forall v \in \mathcal{V}\cup \mathcal{V}_0  
\end{equation}
\begin{equation}\label{eq:assign2}
\sum_{v \in \mathcal{V} \cup \mathcal{V}_0} x_{kv} = 1 \; \forall k \in \mathcal{D}  
\end{equation}
\begin{equation}\label{eq:better_payoff}
\sum_{j\in\mathcal{N}} w_j \cdot \sum_{k\in \mathcal{D}_j} \sum_{v\in\mathcal{V} \cup \mathcal{V}_0}   x_{kv} t_{kv}  -  \text{OPT}_{-i}\leq w_i \cdot u_i(b) \; \; \forall i \in \mathcal{N} 
\end{equation}

The objective \eqref{eq:obj} minimizes the weighted social cost. 
The first two constraints (\ref{eq:assign1}\&\ref{eq:assign2}) are assignment constraints, while the last constraint \eqref{eq:better_payoff} ensures that the resulting payoff within the mechanism, i.e., $v_i(\theta_i, x) - p^{\text{w}}_i(\theta_{-i})$, is better for each player compared to the payoff when acting selfishly.

	\subsection{Online charging demand}\label{subsec:on_demand}
	
	An offline setting as described in Section~\ref{subsec:off_demand} is not applicable in practice, because it is not possible to delay the station assignment when charging requests arrive in the system. 
	Against this background, we extend the offline VCG mechanism to an online setting to allow for immediate assignment decisions by implementing a delayed VCG mechanism similar to \cite{ParkesSinghEtAl2004}.
	Section~\ref{subsec:onvcg} introduces the mechanism for both unweighted and weighted players, while Section~\ref{subsec:online_policy} describes our online allocation policies.


	\subsubsection{Delayed VCG Mechanism}\label{subsec:onvcg}
	
	In this online setting, different platforms interact with the principal and sequentially reveal their drivers' locations to the principal during the planning horizon. We assume that requests must be served immediately.
	Then, the goal of the principal is to make decisions over time that minimize the expected total cost assignment of all drivers in the system. In this setting, each player makes a delayed payment at the end of the planning horizon, that depends on the realized assignment.

	The information vector $\theta_i$ of each player is sequentially revealed, such that $\theta_i=(\theta^k_i)_{k \in \mathcal{D}_i}$, with $k$ being the $k^{\text{th}}$ driver belonging to player $i$. We let $\theta^t:=\theta^k_i$ be the information revealed at time $t$, identifying the location of the requesting driver $k$ and the player $i$ it belongs to.
	We denote with $\Theta = \{\theta_i : i \in  \mathcal{N}\}$ the set of player's information vectors and we define a state $x_t$ as the vector that describes the history of decisions and of revealed information such that $x_t = (\theta^0, ..., \theta^t, a_0, ..., a_{t-1})$. We define policy $\pi = (\pi_1, ..., \pi_t)$ as the sequence of decisions made in each epoch, with $a_t := \pi_t(x_t)$, being the station assigned to driver $\theta^t$ by the principal.
	Each assignment decision induces an immediate cost corresponding to the time required by the associated driver to reach the station, $t_{\theta^t, \pi_t(x_t)}$. At time $t$, a player's cost is the sum of all of its driver's travel times or penalties.
	A player's cost sequentially increases until the planning horizon ends. 
	
	From the principal's perspective, the sequential station assignment problem can be modeled as an \gls{mdp}, whose solution policy corresponds to the policy $\pi$.
	Accordingly, we introduce an \gls{mdp} defined by a policy $\pi$, the state space $X$, and transition functions $p(x'|x,a)$ that describe the probability that the system in state $x$ will transition to post-decision state $x'$ after having taken action $a$. As previously introduced, a state $x$ describes the current stations' assignment of existing drivers in the system and an action $a$ represents the assignment decisions realized by the principal.
	
%
	We define the immediate cost $d^i_t(x_t, a_t)$ for each player $i$ as
	\begin{equation}\label{eq:cost_d}
		d^i_t(x_t, a_t) = 
		\begin{cases}
		c(\theta^t, a_t), & \text{if }  i=\hat{i}    \\
		0 , & \text{else } \;\\
		\end{cases}
	\end{equation}
	with $\hat{i}$ being the player associated to request $\theta^t$, using the station assignment cost $c(\theta^t, a_t)$ as defined in \eqref{eq:cost_VCG}. 
	Let the total immediate cost be $d_t(x_t, a_t) = \sum_{i\in\mathcal{N}}  d^i_t(x_t, a_t)$. Further, we denote by 
	\begin{equation}
		d^i_{<T}(\theta, \pi) = \sum^T_{t=0} d^i_t(x_t=(\theta^0, ..., \theta^t, a_0, ...,  a_{-1}) , \pi_t(x_t))
	\end{equation}	 
	the accumulated cost for player $i$ from $t=0$ until $T$, with reported information $\theta$.
%
 	We can then define the \gls{mdp} value function $V^\pi$ as the expected value of the summed costs over all decision epochs
 	\begin{equation}
 		V^\pi(x_t) = E_{\pi}[\sum^{T}_{\tau=t} d_t(x_\tau, \pi_\tau(x_\tau))]\;,
 	\end{equation}
 	which we can also recursively express as
	\begin{equation}\label{eq:bellman}
		V^\pi(x_t) = d_t(x_t, \pi_t(x_t)) + \sum_{x_{t+1}} p(x_{t+1}|x_t, \pi_t(x_t))V^\pi(x_{t+1})\;.
	\end{equation}
	The objective of the principal is then to find a policy $\pi$ that minimizes the expected cost value $V^\pi(x_0)$ over the planning horizon, with $x_0$ being the initial system state.
	The realized cost for each individual player corresponds to the summed costs over the entire planning horizon for its drivers, formally 
	\begin{equation}
		v^i(\theta^i, a_{\leq T}) = \sum_{\tau=0}^T d^i_\tau(x_\tau, a_\tau)\; .
	\end{equation}

	Similar to the offline setting in Section~\ref{subsec:off_demand}, we assume that a player $i$ may misreport her information $\theta^k_i$ and we refer to the reported information as $\hat{\theta}$, used by the planner to decide on the next action $\hat{a}_t$.
%
%
	Then, the immediate cost induced by action $\hat{a}_t$ corresponds to the distance from the \textit{actual} driver location to the next decided station, and depends on the \textit{actual} location of the driver $\theta^t$ and the action chosen.
%
	Accordingly, $d^i_t((\theta^0, ..., \hat{\theta}^t, \pi), \hat{a}_t) =  c(\theta^t, \hat{a}_t)$ when $i$ reports information $\hat{\theta}^t$ with \textit{actual} information $\theta^t$.

	Analogous to \cite{ParkesSinghEtAl2004}, we define the mechanism's prices as the difference between the sum of realized costs of all players but $i$ and the optimal realized cost with perfect information without player $i$. Each player pays the price at the end of a given planning horizon. Formally, the pricing rule for player $i$ is
	\begin{equation}\label{eq:payment}
		p^i(\theta, \pi)  = - \sum_{j\neq i} \sum^T_{t=0} d^j_t ((\theta_{<t}, a_{<t -1}),\pi_t(x_t)) + OPT_{\theta_{-i}} \;,
	\end{equation}
	with $OPT_{\theta_{-i}}$ being the optimal assignment cost that can be obtained under full information without $i$ being in the system, and $\pi$ being the optimal MDP-policy, such that
	\begin{equation}
	OPT_{\theta_{-i}} = \sum_{j\neq i} d^j_{<T}(\theta_{-i}, \pi)\; .
	\end{equation}
	
	Then, the mechanism $\mathcal{M}_{\text{on}} = (\Theta,p,\pi)$ defined by the players' information $\Theta$, the pricing rule $p=(p_i)_{i \in \mathcal{N}}$ and the assignment decisions policy $\pi$ constitutes the delayed VCG mechanism.
	The pricing rule defined in Equation~\ref{eq:payment} makes $\mathcal{M}_{\text{on}}$ Bayesian-Nash Incentive-compatible, i.e., 
	\begin{equation}\label{eq:BNIC_1}
	E_{\tau>t }[v^i(\theta^i, a_{\leq T})  -   p^i(\theta, \pi)] \leq  E_{\tau>t}[ v^i(\theta^i,\hat{a}_{\leq T})  -   p^i(\hat{\theta}, \pi)] \; \; \forall \hat{\theta}^{t} \; \forall t 
	\end{equation}
		with $\hat{a}_{\leq T} = (\pi_0(x_0), ..., \pi_t(\hat{x}_t),..., \pi_T(\hat{x}_T))$.  
	Here, $\hat{x}_\tau = (\hat{\theta}_{<\tau}, \hat{a}_{\leq \tau -1})$ with $\hat{\theta}_{<\tau}= (\theta^0, ..., \hat{\theta}^{t-1},$ $ ...., \hat{\theta}^{\tau}) \;\forall \tau \in [t, T]$. We assume that $\hat{\theta}^{\tau} =  \theta^{\tau}$ if player $j\neq i$ reports in epoch $\tau$, i.e., all players but $i$ truthfully report their information.
	Equation~\ref{eq:BNIC_1} ensures that the expected player's payoff cannot be better when the player reports false information $\hat{\theta}^{t}$, instead of $\theta^{t}$.

	\begin{prop}\label{prop:BNIC}
		Mechanism $\mathcal{M}_{\text{on}}$ is Bayesian-Nash incentive-compatible.
	\end{prop}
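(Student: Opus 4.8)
The plan is to follow the classical VCG alignment argument adapted to the online setting of \cite{ParkesSinghEtAl2004}: I would substitute the pricing rule \eqref{eq:payment} into each player's payoff and show that it coincides with the total realized social cost, up to an additive term that does not depend on that player's own report. Bayesian-Nash incentive compatibility then reduces to the statement that truthful reporting minimizes the expected realized social cost, which I would derive from the optimality of the MDP policy $\pi$ via the Bellman recursion \eqref{eq:bellman}.

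First I would carry out the substitution. Writing the realized cost as $v^i(\theta^i, a_{\leq T}) = \sum_{t=0}^{T} d^i_t(x_t, a_t)$ and inserting $p^i(\theta,\pi) = -\sum_{j\neq i}\sum_{t=0}^{T} d^j_t(x_t, a_t) + OPT_{\theta_{-i}}$, the own cost and the competing costs recombine into the total immediate cost $d_t = \sum_{j\in\mathcal{N}} d^j_t$, yielding
\begin{equation}
v^i(\theta^i, a_{\leq T}) - p^i(\theta,\pi) = \sum_{t=0}^{T} d_t(x_t, a_t) - OPT_{\theta_{-i}}\;.
\end{equation}
The key observation is that $OPT_{\theta_{-i}}$ depends only on the reports of the players $j \neq i$ and on the true realizations, so it is constant with respect to player $i$'s report $\hat{\theta}^t$. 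Hence, after taking the conditional expectation $E_{\tau>t}$, minimizing the expected payoff over $i$'s reporting strategy is equivalent to minimizing the expected total realized social cost $E_{\tau>t}[\sum_{t} d_t(x_t, a_t)]$, and both sides of \eqref{eq:BNIC_1} differ only through this quantity.

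It then remains to argue that the expected realized social cost is minimized by truthful reporting. When player $i$ reports truthfully and, by assumption, all other players do as well, the states $x_t$ are the true states and the decisions $a_t = \pi_t(x_t)$ are exactly those prescribed by the optimal policy, so the expected social cost equals the MDP optimum $V^\pi(x_0)$. Under a misreport $\hat{\theta}^t \neq \theta^t$, the principal instead transitions through the corrupted states $\hat{x}_t$ and selects $\hat{a}_t = \pi_t(\hat{x}_t)$; crucially, by \eqref{eq:cost_d} the immediate cost is still accrued at the \emph{true} location $\theta^t$, so the misreport alters only the induced decision sequence and never the cost accounting. The misreporting strategy therefore realizes, on the true information process, the expected cost of some decision rule that differs from $\pi$; since $\pi$ minimizes the expected total cost over all such rules, this expected cost is at least $V^\pi(x_0)$, which is precisely \eqref{eq:BNIC_1}.

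The main obstacle is making this last step rigorous, namely arguing that a single-period misreport at epoch $t$ induces a well-defined, weakly suboptimal decision rule on the true MDP while exploiting that the cost in \eqref{eq:cost_d} is evaluated at the actual driver location even though the chosen action depends on the reported one. I would handle this by a one-shot-deviation argument on the Bellman recursion \eqref{eq:bellman}: replacing $\pi_t(x_t)$ by the action $\pi_t(\hat{x}_t)$ that the misreport induces, and then continuing with $\pi$ from the corrupted continuation states, cannot lower the continuation value $V^\pi(x_t)$ by optimality of $\pi$ at state $x_t$. The corresponding general result of \cite{ParkesSinghEtAl2004} then supplies the inductive step, completing the argument.
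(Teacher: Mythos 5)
Your proposal is correct and follows essentially the same route as the paper's proof, which itself recalls Theorem~1 of \cite{ParkesSinghEtAl2004}: substitute the pricing rule so the payoff becomes the total realized cost minus the report-independent term $OPT_{\theta_{-i}}$, observe that misreports change only the induced decisions (not the cost accounting, which uses true locations), and conclude by comparing $V^{\pi}(x_t)$ with the value $V^{\hat{\pi}}(x_t)$ of the misreport-induced policy $\hat{\pi}(x_\tau)=\pi(\hat{x}_\tau)$, which is weakly larger by optimality of $\pi$. The only cosmetic difference is that the paper fixes $t$ as the last truthful epoch (so $\hat{x}_t=x_t$) and defines $\hat{\pi}$ explicitly over $[t,T]$, where you instead defer the multi-epoch bookkeeping to a one-shot-deviation/induction step — the same argument in substance.
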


	\begin{proof}
		We refer to Appendix~\ref{app:proofs} for a proof of Proposition \ref{prop:BNIC}.
	\end{proof}
	
%
	

	Similar to the VCG mechanism (see Section~\ref{sec:pricing}), we show that the delayed VCG mechanism $\mathcal{M}$ can be extended to account for weighted players. We can adapt the immediate cost by weighting players, and introducing the new immediate cost 
	\begin{equation}
		\tilde{d}_t(x_t, a_t)  = \sum_{i \in  \mathcal{N} }w_i \cdot d^i_t(x_t,a_t)
	\end{equation}
	with $w_i\geq 1\; \forall i \in \mathcal{N}$.
%
	 Note that $v^i(\theta^i, a_{\leq T})$ remains unchanged.
	We accordingly update the pricing rule as
	\begin{equation}
	\tilde{p}^i(\theta, \pi)  = - \sum_{j\neq i} \sum^T_{t=0} \frac{w_j}{w_i} d^j_t ((\theta_{<t}, a_{<t -1}),\pi_t(x_t)) + \tilde{OPT}_{\theta_{-i}}
	\end{equation}
	where $\tilde{OPT}_{\theta_{-i}} = \sum_{j\neq i} \frac{w_j}{w_i} d^j_{<T}(\theta_{-i}, \pi)$.
	
	We define $\tilde{\mathcal{M}}_{on}= (\Theta, \tilde{p}, \pi)$ as the weighted delayed VCG mechanism, and show that $\tilde{\mathcal{M}}_{on}$ is still in-expectation incentive-compatible.
	
	\begin{prop}\label{prop:BNIC2}
		Mechanism $\tilde{\mathcal{M}}_{\text{on}}$ is Bayesian-Nash incentive-compatible.
	\end{prop}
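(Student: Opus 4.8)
The plan is to reduce the weighted mechanism to the unweighted one of Proposition~\ref{prop:BNIC} by a rescaling argument, exploiting that every weight $w_i$ is strictly positive. I would first substitute the weighted pricing rule $\tilde{p}^i$ into the player's payoff $v^i(\theta^i, a_{\leq T}) - \tilde{p}^i(\theta,\pi)$ and multiply through by $w_i$. Since $v^i(\theta^i, a_{\leq T}) = \sum_{t} d^i_t(x_t, a_t)$ is left unchanged by the weighting, this yields
\begin{equation}
w_i\bigl(v^i(\theta^i, a_{\leq T}) - \tilde{p}^i(\theta,\pi)\bigr) = \sum_{t} \tilde{d}_t(x_t, a_t) \;-\; \sum_{j\neq i} w_j\, d^j_{<T}(\theta_{-i},\pi)\;,
\end{equation}
where the first term is precisely the realized weighted social cost under policy $\pi$, and the second term equals $w_i\,\tilde{OPT}_{\theta_{-i}}$, which depends only on $\theta_{-i}$ and is therefore invariant to player $i$'s report.

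I would then observe that, because $w_i>0$, minimizing the player's expected payoff is equivalent to minimizing the expected value of the right-hand side above, which has exactly the structure of the externality term driving the unweighted proof, with each immediate cost $d^j_t$ replaced by its weighted counterpart $w_j d^j_t$. The policy $\pi$ is, by construction, the optimal MDP policy for the weighted cost $\tilde{d}_t$, so it minimizes $E_{\tau>t}[\sum_t \tilde{d}_t(x_t,a_t)]$ over the true demand distribution. Truthful reporting therefore lets the principal optimize exactly the quantity whose expectation the player seeks to minimize; any misreport $\hat{\theta}^t$ forces the planner to act optimally against a perturbed state $\hat{x}_t$ while the true cost is realized, which can only raise the expected weighted social cost. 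Combining these facts reproduces inequality~\eqref{eq:BNIC_1} with $\tilde{p}$ replacing $p$, establishing Bayesian--Nash incentive compatibility.

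I expect the main obstacle to be the careful bookkeeping of the conditional expectation $E_{\tau>t}[\cdot]$ along the state trajectory under misreporting: I must confirm that the rescaled subtracted term $w_i\tilde{OPT}_{\theta_{-i}}$ is genuinely independent of player $i$'s report at every epoch, and that optimality of $\pi$ for the \emph{weighted} MDP is precisely what renders the truthful trajectory expected-cost-minimal. Once the multiplication by the positive constant $w_i$ is in place, the remainder follows verbatim from the argument for Proposition~\ref{prop:BNIC}, since scaling by a positive constant preserves both the direction of the incentive inequality and the non-positivity of the price; hence the weighting introduces no genuinely new difficulty.
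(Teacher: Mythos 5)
Your proposal is correct and follows essentially the same route as the paper's proof: multiply the incentive inequality by $w_i>0$, use that $\tilde{OPT}_{\theta_{-i}}$ (after rescaling) is independent of player $i$'s report, and reduce the claim to the optimality of $\pi$ for the MDP with weighted immediate cost $\tilde{d}_t$, exactly as in the argument for Proposition~\ref{prop:BNIC}. Your algebraic identity $w_i\bigl(v^i - \tilde{p}^i\bigr) = \sum_t \tilde{d}_t(x_t,a_t) - w_i\,\tilde{OPT}_{\theta_{-i}}$ is the same reduction the paper performs in Equations~\eqref{eq:BNIC_2}--\eqref{eq:BNIC_3}.
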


	\begin{proof}
		We refer to Appendix~\ref{app:proofs} for a proof of Proposition \ref{prop:BNIC2}.
	\end{proof}

	\subsubsection{Online allocation policy}\label{subsec:online_policy}
	
	In an online setting, each decision stage is triggered by a new driver's request, and the principal assigns a charging station to the driver reported by a player based on the online allocation policy. 
	Finally, the mechanism computes prices at the end of the planning horizon, which terminates after a fixed number of considered requests.
	In the following, we derive two online allocation policies to solve the underlying MDP of the delayed VCG mechanism $\mathcal{M}_{\text{on}}$.
	The first policy \textit{greedy} allocates the closest available station to the requesting driver. The second policy \textit{data-driven} bases on a data-driven algorithm \citep[cf.][]{GarrattiCampi2022}, and learns the policy parameterization based on historical charging requests.
	As both policies are suboptimal in practice, the truthfulness of the mechanism cannot be guaranteed in theory. 
%
%
	We notice that if one wants to formally ensure truthfulness, one could extend the mechanism similar to the second chance mechanism as introduced in \cite{NisanRonen2007}. In this case, players have a chance to report a different information only if it improves their utility, and accordingly the expected social cost.
	
%

%

	\paragraph{Greedy policy:} The greedy policy is a deterministic policy that assigns a requesting driver to the closest available station, such that $\pi_t(x_t) = \argmin_{v\in \bar{\mathcal{V}}} t_{\theta^t,v}$, with $\theta^t$ being the location of the requesting driver in state $x_t$, and  $\bar{\mathcal{V}}$ being the set of stations that have not been already assigned to preceding drivers.

	\paragraph{Data-driven policy:} The data-driven policy is a probabilistic policy, i.e., a policy that chooses its action based on a probability distribution. 
	To sample actions, we use a parametric data-driven online algorithm to determine the action taken at each decision stage.
	Specifically, the algorithm parameters denote with which probability to take a specific action. We learn the parameterization of this algorithm offline, based on a large set of training input sequences, each consisting of charging requests of a fixed length.

	%
%

	We derive the data-driven algorithm $A$ as follows.
	Formally, we let $\mathcal{I}$ be the set of all possible scenarios $I$ where $I = [\theta^0, ...,\theta^t, ...,  \theta^T] $, with $T+1$ being the input sequence length and $\theta^t \in \Theta$ being the location of a driver in position $t$, such that $\mathcal{D}_I$ corresponds to the set of drivers $k$ contained in $I$.
	Our goal is to find an algorithm $A$ that minimizes the competitive ratio $\alpha$ for all possible input sequences $I \in \Delta$, such that 
	\begin{equation}\label{eq:compete_ratio}
	\begin{aligned}
	A^* =&\argmin_{A} \alpha\\
	& ON(I_l, A) \leq \alpha OPT(I_l) \forall I \in \Delta, \\
	&\sum_{v \in \mathcal{V} \cup \mathcal{V}_0} x_{kv} = 1\; \forall k \in \mathcal{D}_I,\; \forall I \in \Delta, \\
	&\sum_{k \in \mathcal{D}_I} x_{kv} \leq 1\; \forall v \in \mathcal{V},\; \forall I \in \Delta\\
	&x_{kv} \in [0,1] \forall k \in \mathcal{D}_I,\;\forall v \in \mathcal{V}\;\forall I \in \Delta\; ,\\
	\end{aligned}
	\end{equation}
	where $ON(I, A)$ corresponds to the online solution obtained with $A$ for a given scenario $l$ and  $OPT(I)$ corresponds to the offline solution obtained for $l$.
	The offline solution $OPT(I)$ corresponds to the minimum cost assignment of the drivers in $I$ to charging stations in $v \in \mathcal{V} \cup \mathcal{V}_0$, as
	\begin{equation}\label{eq:off_obj}
	OPT(I) = \min_{x_{kv}\forall k\in \mathcal{D}_I \; \forall v \mathcal{V} \cup \mathcal{V}_0 } \sum_{v\in \mathcal{V}\cup \mathcal{V}_0} \sum_{k \in \mathcal{D}_I} x_{kv} \cdot c(k,v)
	\end{equation}
	\begin{equation}\label{eq:off_1}
	\sum_{v \in \mathcal{V} \cup \mathcal{V}_0} x_{kv} = 1\; \forall k \in \mathcal{D}_I
	\end{equation}
	\begin{equation}\label{eq:off_2}
	\sum_{k \in \mathcal{D}_I} x_{kv} \leq 1\; \forall v \in \mathcal{V}
	\end{equation}
	\begin{equation}\label{eq:off_3}
	x_{kv} \in \{0,1\}\; \forall k \in \mathcal{D}_I,\;\forall v \in \mathcal{V}
	\end{equation}
	\begin{equation}\label{eq:off_4}
		c(k,v) = 
		\begin{cases}
			\bar{\beta}, & \text{if } v = v_0   \lor \gamma(k,v) \geq  \bar{S} \\
		t_{k,v} , & \text{else. }\\
		\end{cases}
	\end{equation}
%
	Here, the objective \eqref{eq:off_obj} is to find an assignment of drivers to stations that yields minimal cost. Each driver must be assigned to at least one physical or virtual station \eqref{eq:off_1}, and at most one driver can be assigned to a physical station \eqref{eq:off_2}. A driver assignment cost corresponds to the driving time from its current location to the physical station or to a penalty cost if the station is further away than $\bar{S}$ or if the station is virtual \eqref{eq:off_4}.
	
	
%
	
	To mitigate the computational burden, we chose $A$ to lie on a parametric class and relax the integer constraint on the decision variables $x$. 
	We then learn the algorithm $A^*$ based on $L$ i.i.d. sampled training input sequences $I_0,...,I_L$ of the uncertainty set $\mathcal{I} \subset \Delta$.
	Accordingly, the learning objective becomes
	\begin{equation}
	\begin{aligned}
	A^* =&\argmin_{A} \alpha\\
	& ON(I_l, A) \leq \alpha OPT(I_l) \forall I_l \in \mathcal{I} ,\; l\in[L] \\
	&\sum_{v \in \mathcal{V} \cup \mathcal{V}_0} x_{kv} = 1\; \forall k \in \mathcal{D}_{I_l},\; \forall I_l \in \mathcal{I} ,\; l\in[L] \\
	&\sum_{k \in \mathcal{D}_{I_l}} x_{kv} \leq 1\; \forall v \in \mathcal{V},\; \forall I_l \in \mathcal{I} ,\; l\in[L] \\
	&x_{kv} \geq 0 \forall k \in \mathcal{D}_{I_l},\;\forall v \in \mathcal{V}\;.\\
	\end{aligned}
	\end{equation}
	
 We introduce the parameterization vector of the algorithm A, as $\vec{p}_{\theta, t} \in \mathbf{R}^{|\mathcal{V}|}$ for each ($\theta, t$)-pair, such that $0 \leq p_{\theta, t,v} \leq 1$ represents the probability for a driver in location $\theta$ with position index $t$ in the input sequence to be assigned to station $v$.
	Thus, we derive the algorithm $A$ with parameterization $\vec{p}^*_{\theta, t}$ for all ($\theta,t$)-pairs, as follows
	\begin{equation}\label{eq:dd_obj}
	  A = \argmin_{p_{\theta, t,v} \forall \theta \in \Theta, t \in [0,...,T], v \in \mathcal{V}} \alpha
	\end{equation}
	\begin{equation}\label{eq:dd_qual}
	\sum^T_{t=0}  \sum_{v \in \mathcal{V}} p_{\theta^t, t,v} \cdot c(\theta^t, v) \leq \alpha \cdot OPT(I_l) \; \forall l \in [L]\; \theta_t \in I_l\\
	\end{equation}
	\begin{equation}\label{eq:dd_assignment}
	\sum_{v \in \mathcal{V}} p_{\theta, t,v}  =  1\; \forall \theta \in \Theta, \forall t \in [0 ,..., T]
	\end{equation}
	\begin{equation}\label{eq:dd_capacity}
	\sum^T_{t=0}  \sum_{\theta \in \Theta}  p_{\theta, t,v} \cdot  n(\theta, t, \mathcal{I}) \leq 1\; \forall v \in \mathcal{V}\;,
	\end{equation}
	\begin{equation}\label{eq:dd_domain}
	\alpha \in  \mathbf{R}^+\\	
	\end{equation}
	with $n(\theta, t, \mathcal{I})$ being the likelihood that a driver is located in $\theta$ with position index $t$, estimated based on the occurrence of such combination ($\theta, t$) among all input sequences $I_l,\;l\in[L]$.
%
	The Objective \eqref{eq:dd_obj} is to find a parameterization that minimizes the competitive ratio $\alpha$, i.e., the ratio between the optimal cost and the cost obtained with the parametric algorithm as introduced in~\eqref{eq:compete_ratio}. Constraint~\eqref{eq:dd_qual} ensures that the ratio between the expected online solution and the optimal offline solution is lower than the minimized competitive ratio $\alpha$. Constraint~\eqref{eq:dd_assignment} enforces that we compute a discrete probability distribution, while Constraint~\eqref{eq:dd_capacity} ensures that we do not assign (in-expectation) more than one driver to a station.
	
	We apply this data-driven policy in state $x_t = (\theta^0, ..., \theta^t, a_0, ..., a_{t-1})$ as follows. First, we exclude already allocated stations, i.e., $v  \in \{a_0, ..., a_{t-1} \}$, from the possible stations assignment by setting the probability to be selected to $0$, such that $p_{\theta,t,v} = 0 $, $\forall\theta \; \text{st. }\theta = \theta^t$. Then, we scale the assignment probabilities of the remaining candidate stations to preserve the discrete probability distribution as 
	$$p'_{\theta,t,v} := \frac{p_{\theta,t,v}}{\sum_{v\in \mathcal{V} p_{\theta,t,v} \cdot \delta_v}} \forall v\in \mathcal{V}\;,$$ 
	with $\delta_v=0$ if $v  \in \{a_0, ..., a_{t-1} \}$. Finally, we select a station $v$ for the driver in location $\theta$ based on the probabilities $p'_{\theta,t,v}$.

	In a setting with weighted players, optimizing weights requires to know each player's payoff resulting from a selfish behavior. However, as requests are unseen, we do not know a player's selfish payoff ahead of time and thus cannot optimize weights a-priori.
	To remedy this issue in practice, one could derive the weights a-posteriori, i.e., after a planning horizon has terminated, and apply them for the next planning horizon. 
	In this case, one needs to additionally ensure that the assignment costs account for the players' weights, as $c(k,v,i) = w_i \cdot c(k,v)$. 
	Consequently, the parameterization needs to be indexed on the players, such that $p_{\theta,t,v,i}$ represents the probability that a driver in location $\theta$ at requesting time $t$ and belonging to player $i$ should be assigned to station $v$.

	\section{Experimental design}\label{sec:experimental_setting}
	To analyze the impact of coordinating the charging demand at drivers' and platforms' levels, we derive real-world test instances based on the charging station network for the city of Berlin, Germany (cf. Figure~\ref{fig:map}). 

	We consider three unweighted navigation platforms, i.e., players. The ratio of available stations and requesting drivers is the main factor impacting our results. We accordingly vary the total number of drivers in the system $N\in[2,..,40]$, the search radius ($\bar{S}\in\{1000,2000\}$ meters) and the radius of the circular area in which all drivers depart ($s^r\in\{300,700,1100\}$ meters). We set the penalty $\bar{\beta} = 120$ min, such that failing the search for a driver corresponds to a delay of two hours.

	\begin{table}[bp]
		\begin{minipage}[t]{0.5\textwidth}				
			\centering
			\scalebox{0.55}{\input{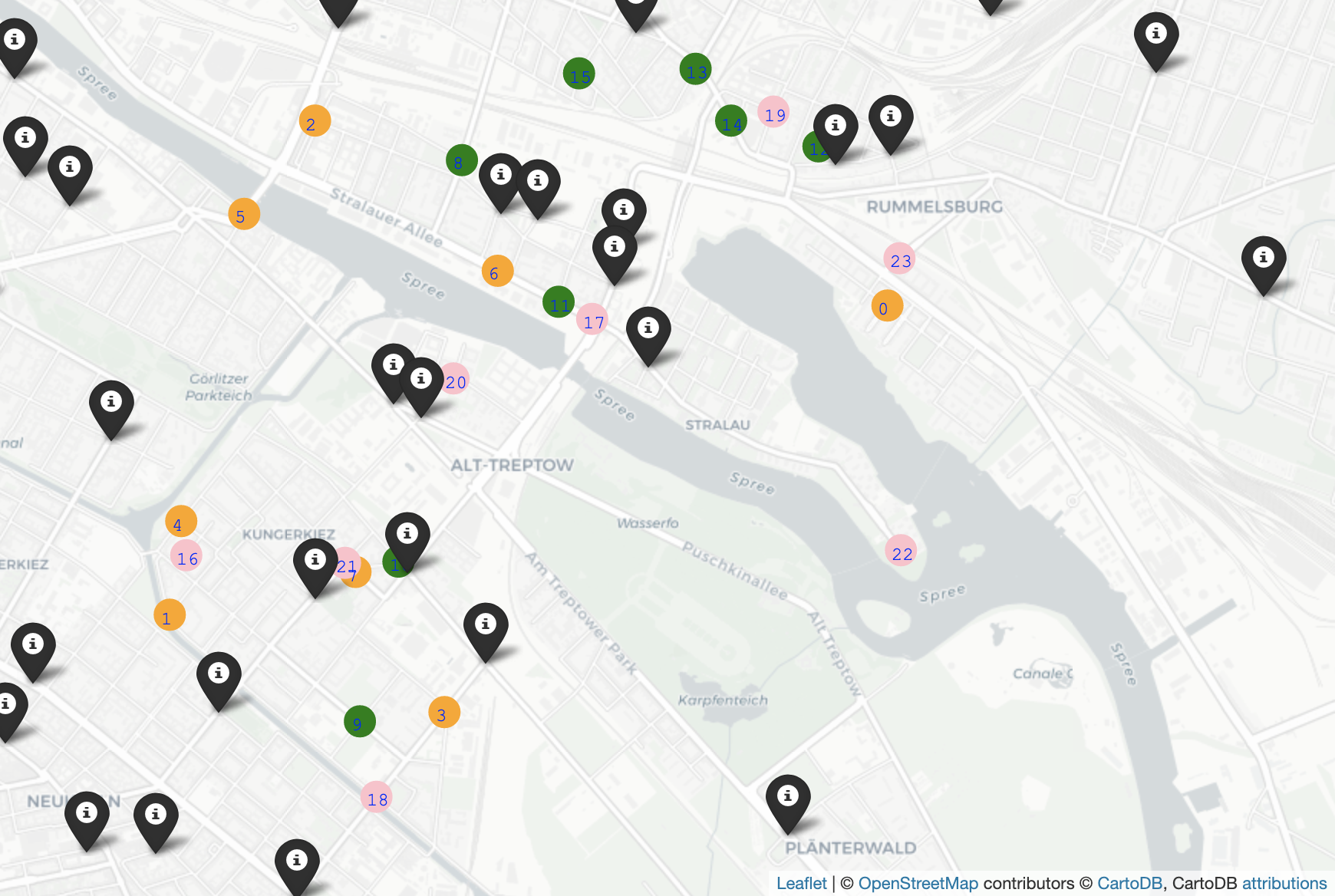}}
			\captionof{figure}{Charging station and driver distribution in a part of Berlin}
			\label{fig:map}			
		\end{minipage}\qquad
		\begin{minipage}[t]{0.4\textwidth}
			\centering
			\vspace{-4cm}
			\hspace{1.5cm}
			\caption{Driver distribution scenarios}
			 {\setlength{\tabcolsep}{0.1cm}\footnotesize
				\begin{tabular}{lrrr}
					\toprule
					\multicolumn{1}{c}{\multirow{2}[4]{*}{scenario}} & \multicolumn{3}{c}{Navigation platform} \\
					\cmidrule{2-4}          & A     & B     & C \\
					\midrule
					\gls{big}   & 25.00\% & 25.00\% & 50.00\% \\
					\gls{all} & 33.33\% & 33.33\% & 33.33\% \\
					\gls{small}   & 40.00\% & 40.00\% & 20.00\% \\
					\bottomrule
			\end{tabular}}
			\label{tab:scenario}%
		\end{minipage} 
		\vspace{-0.3cm}
	\end{table}

%
%
%

%
	To account for the navigation platforms' heterogeneity, we vary the players' imbalance, with respect to the number of drivers managed by each platform as shown in Table~\ref{tab:scenario}.
	We compare three driver distribution scenarios. In the \gls{small} scenario, one player accounts for 20\% of the total demand, whereas in the \gls{big} scenario, one single player accounts for 50\% of the demand. In both scenarios, the other two players share the remaining demand equally. 
	The \gls{all} scenario represents a homogeneous player scenario, with equal distribution of drivers between players. 


	For both offline and online charging demand, we benchmark the \gls{vcg}-based assignment against two naive assignment strategies. In \gls{greed}, each EV driver visits the closest station in its vicinity, whereas in \gls{self}, navigation platforms compute the cost-minimum assignment of stations to drivers with respect to their own charging demand. 
%
	We compare all three settings, i.e., \gls{vcg}, \gls{greed}, and \gls{self}, with respect to the social cost $S_g$, i.e., the sum of each player's drivers to stations assignment cost, which reflects end-users' satisfaction.
	To analyze the benefits of \gls{vcg} pricing, we further compare each player's payoff, including \gls{vcg} prices when applicable.
	To complement the offline analyses, we analyze the impact of optimally weighting players, such that all players benefit from \gls{vcg} when applying the weighted \gls{vcg} pricing scheme. Here, we solve the optimization problem introduced in Equations~\eqref{eq:obj}-\eqref{eq:better_payoff}, and set the maximal weight value to $W=10$.

	For the online charging demand setting, we assume that for both \gls{greed} and \gls{self}, there is a latency between the time when the drivers stop at the station and the time the station's availability status is up-to-date for the succeeding drivers. We consider a latency of $\tau=3$ minutes, i.e., the time for the driver to stop and start charging and for the system to update the availability information. For \gls{self} the latency only concerns the other player's drivers, as the player knows which stations were assigned to its drivers. We further assume that requests arise every $\Delta_t$ minute with  $\Delta_t \in \{0.5,1.5,2.5\} $ minutes.

	For the data-driven parameterization, we limit the computational burden by identifying a set of at maximum 40 possible locations a driver can depart from ($|\theta|\leq 40$). In most instances, candidate departing locations cover all vertices in the corresponding road network, i.e., roads junctions. 
%
	To evaluate the stochastic data-driven online policy on a test instance with $N$ drivers, we randomly sample $n=500$ start locations for each driver,
%
and compute the simulated estimates of the social cost and each player's individual payoff.
	\section{Results}\label{sec:results}


	In the following, we first detail our results for an offline charging demand setting in Section~\ref{subsec:off_res}, to obtain an upper bound on the system improvement as well as to understand the interactions between navigation platforms with different driver shares. We further discuss the impact of optimally weighted platforms in this setting in Section~\ref{subsec:off_res_weights}. 
%
	We then discuss our results for an online charging demand setting in Section~\ref{subsec:on_res} and compare it to the offline benchmark.

	\subsection{Offline allocation results}\label{subsec:off_res}

	
	Figure~\ref{fig:sys_cost} shows the distribution  of the social cost for all three strategies (\gls{self}, \gls{greed}, \gls{vcg}) depending on the number of drivers in the system for both small ($\bar{S}=1000$m) and large ($\bar{S}=2000$m) search radii.
	\begin{figure}[!tp]
			\centering
			\scalebox{0.65}{\input{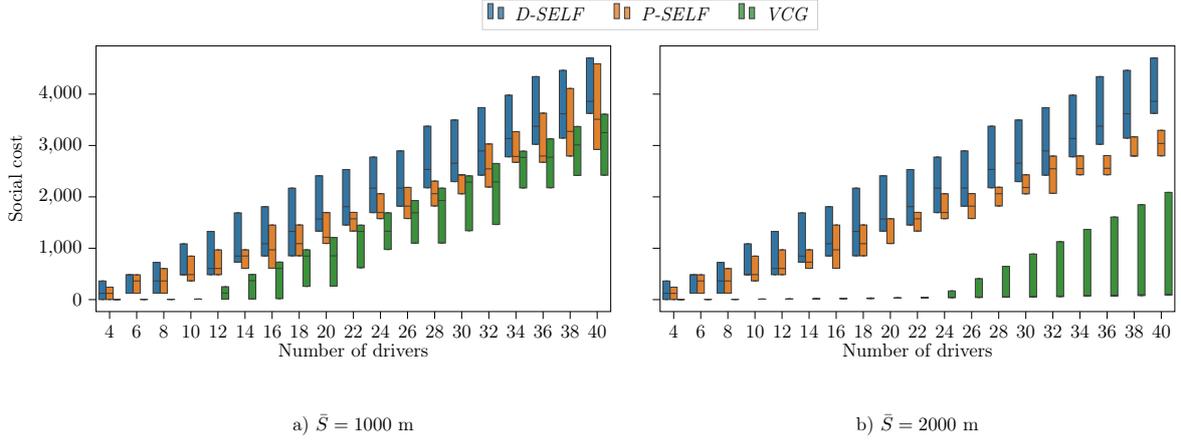}}
		\caption{Benefit of VCG pricing on the social cost}\label{fig:sys_cost}
		\fnote{\footnotesize Results are averaged over all values of $s^r$ with $s^r\{300,700,110\}$ m.}		
	\end{figure}
	As can bee seen, selfish navigation service platforms can increase the overall user satisfaction 	by optimally balancing their own charging demand (\gls{self}) compared to selfish EV drivers taking greedy visit decisions (\gls{greed}). However, VCG pricing further reduces the social cost significantly. 
%
%
	Our results show that the benefit of platform coordination increases when the ratio between the total number of charging stations and the total number of drivers in the system increases.
	In contrast, when the likelihood of visit conflicts is high, i.e., many drivers depart within small distance or with small search radius, the coordination improvement decreases. However, coordination remains necessary in these cases as selfish platform optimization becomes as bad as individual greedy driver decisions.
	
	\begin{res}
		Coordinating platforms with VCG pricing decreases the social cost by up to 52\% compared to an allocation obtained with selfish platforms.
	\end{res}

	Figure~\ref{fig:sys_utility} shows all three platforms' mean payoffs depending on the number of drivers and for each driver distribution scenario. To compare platforms with a heterogeneous driver share, we normalize each platform's payoff by its number of managed drivers.
	\begin{figure}[!tb]
		\scalebox{.6}{\input{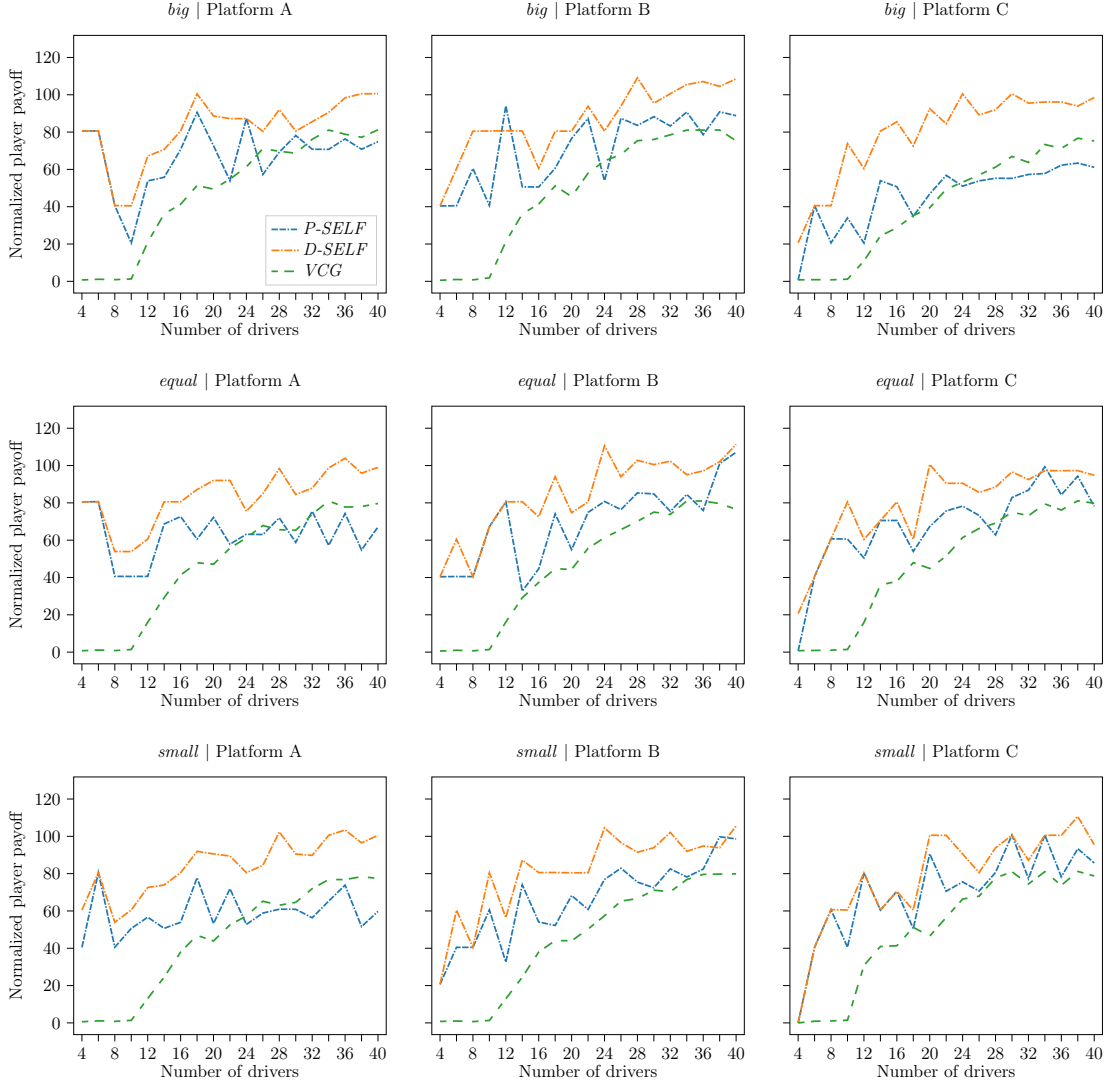}}
		\caption{Impact of the distribution scenario on each platform's payoff for \gls{vcg}, \gls{self}, and \gls{greed}}
		\label{fig:sys_utility}
	\end{figure}
	\begin{figure}[!t]
		\scalebox{.64}{\input{chapters/figures/Scenario_impact_utility_player_VCG}}
		\caption{Impact of the distribution scenario on each platform's payoff}
		\label{fig:scenario_impact}
	\end{figure}
	As can be seen, VCG pricing significantly decreases a platform's payoff on average by 27\% (\gls{all}), respectively 21\% (\gls{big}), and 26\% (\gls{small}) compared to the payoff obtained for platform-optimized allocations (\gls{self}).
	In case of a high charging demand, i.e., for more than $20$ drivers departing in a small vicinity and with lower search times, the induced VCG prices increase a platform's payoff compared to an uncoordinated setting.
	Focusing on the impact of the driver distribution, our results highlight two effects. 
	Utilizing VCG pricing, a platform's payoff decreases if its share of drivers in the system increases (see e.g., Platform C, in Figure~\ref{fig:sys_utility}).
	Contrarily, the improvement obtained with \gls{vcg} pricing compared to a selfish assignment is larger for a platform with proportionally less drivers. In some cases, e.g., for $\bar{S}=1000$ m and a very large number of drivers, \gls{vcg} does not outperform \gls{self} for a platform that manages the highest share of drivers.
	If a platform manages most drivers, then the subsystem composed of other platforms' drivers is smaller such that the platform's impact on it is low, which decreases the large platform's price.
%
	In this case, the impact of other platforms on the platform with the largest drivers share decreases as well, such that the benefits of coordination do not always compensate the additional VCG prices induced by coordination.
	%
%
	\begin{res}
		A navigation platform with a low share of drivers (nearly) always benefits from VCG pricing compared to \gls{self}. In contrast, a platform with a higher share of drivers is sometimes better off not participating in a VCG setting.
	\end{res}


	Figure~\ref{fig:scenario_impact} compares the mean payoffs obtained for each platform and distribution scenario depending on the number of drivers. 
	In line with Figure~\ref{fig:sys_utility}, results show that a platform with more drivers gets, proportionally to the number of drivers managed, a lower payoff compared to platforms with less drivers. As can be further seen, the impact of a heterogeneous driver share is bigger for a large search radius ($\bar{S} = 2000$ meters) compared to a small search radius ($\bar{S} = 1000$ meters).
	In summary, if a platform participates in a \gls{vcg} setting, its payoff will inversely decrease with the number of drivers it manages in the system, but the benefit of participating in \gls{vcg} compared to \gls{self} with a lot of drivers will decrease, too. If a platform has few drivers, it is better off participating in the system.

	\begin{res}
		If participating in a \gls{vcg} coordinated system, a navigation platform can decrease its cost by increasing its drivers share. However, the relative improvement obtained with \gls{vcg} compared to \gls{self} decreases in this case.
	\end{res}

%
%
%
	
	Figure~\ref{fig:metrics}.a shows the average driving time to an available station for any driver in the system, while Figure~\ref{fig:metrics}.b shows the average station assignment success rate for a platform for both $\bar{S}=1000$ m and $\bar{S}=2000$ m.
	\begin{figure}[!tp]
		\begin{tabular}{x{0.45\textwidth}x{.45\textwidth}}
			\scalebox{.8}{
\begin{tikzpicture}[font=\small]

\definecolor{color0}{rgb}{0.12156862745098,0.466666666666667,0.705882352941177}
\definecolor{color1}{rgb}{1,0.498039215686275,0.0549019607843137}
\definecolor{color2}{rgb}{0.172549019607843,0.627450980392157,0.172549019607843}

\begin{groupplot}[group style={group size=1 by 2, vertical sep=2cm}]
\nextgroupplot[
legend cell align={left},
legend style={fill opacity=0.8, draw opacity=1, text opacity=1, draw=white!80!black},
scaled x ticks=manual:{}{\pgfmathparse{#1}},
tick align=outside,
tick pos=left,
height=6cm,
width=9cm,
title={$\bar{S}=1000$ m},
x grid style={white!69.0196078431373!black},
xmin=-0.5, xmax=18.5,
xtick style={color=black},
xlabel={Number of drivers},
xtick={0,1,2,3,4,5,6,7,8,9,10,11,12,13,14,15,16,17,18},
xticklabels={4,,8,,12,,16,,20,,24,,28,,32,,36,,40},
y grid style={white!69.0196078431373!black},
ylabel={Driver search time},
ymin=0.137952960392552, ymax=2.63714220318498,
ytick style={color=black}
]
\addplot [line width=1.08pt, color0, dash pattern=on 4pt off 1pt on 4pt off 1pt on 1pt off 1pt]
table {%
0 0.5083505
1 0.461277314285714
2 0.506869540740741
3 0.526420666666667
4 0.542973311111111
5 0.548714148148148
6 0.566269233333333
7 0.555562821481482
8 0.709253651851852
9 0.566880873333333
10 0.643499314814815
11 0.774845027407407
12 0.833517616190476
13 0.718842893333333
14 0.8262734125
15 0.693792673333333
16 0.820107311428571
17 0.855385271693122
18 1.02199739166667
};
\addlegendentry{\gls{self}}
\addplot [line width=1.08pt, color1, dash pattern=on 5pt off 1pt on 1pt off 1pt]
table {%
0 0.4813497
1 0.4732752
2 0.499475033333333
3 0.443555971428571
4 0.488324885714286
5 0.454463455555556
6 0.479707542857143
7 0.568524898888889
8 0.494487453333333
9 0.391235957142857
10 0.45038886
11 0.399836436190476
12 0.50860194
13 0.432797114285714
14 0.425635785714286
15 0.412645871428571
16 0.608018266666667
17 0.433450180952381
18 0.440650351428571
};
\addlegendentry{\gls{greed}}
\addplot [line width=1.08pt, color2, dash pattern=on 4pt off 4pt on 6pt off 6pt]
table {%
0 0.613447066666667
1 0.795072888888889
2 0.701394977777778
3 0.923443166666667
4 0.773287462962963
5 0.877547817037037
6 0.939319059259259
7 0.897138807407407
8 1.02489009671958
9 0.910365586137566
10 0.936881984126984
11 0.880374632380952
12 0.909403037037037
13 0.85360413577381
14 1.00632636148148
15 0.80212037015873
16 0.94954484617284
17 0.873233642328042
18 1.08596465603175
};
\addlegendentry{\gls{vcg}}

\nextgroupplot[
tick align=outside,
tick pos=left,
title={$\bar{S}=2000$ m},
x grid style={white!69.0196078431373!black},
xlabel={Number of drivers},
height=6cm,
width=9cm,
xmin=-0.5, xmax=18.5,
xtick style={color=black},
xtick={0,1,2,3,4,5,6,7,8,9,10,11,12,13,14,15,16,17,18},
xticklabels={4,,8,,12,,16,,20,,24,,28,,32,,36,,40},
y grid style={white!69.0196078431373!black},
ylabel={Driver search time},
ymin=0.137952960392552, ymax=2.63714220318498,
ytick style={color=black}
]
\addplot [line width=1.08pt, color0, dash pattern=on 4pt off 1pt on 4pt off 1pt on 1pt off 1pt, forget plot]
table {%
0 0.5083505
1 0.461277314285714
2 0.506869540740741
3 0.526420666666667
4 0.542973311111111
5 0.572261651851852
6 0.5653183
7 0.555562821481482
8 0.703225688888889
9 0.57169854
10 0.639040885185185
11 0.743960627407407
12 0.800871263703704
13 0.739972183703704
14 0.86536118
15 1.01845694253968
16 1.01008144116402
17 0.951393305714286
18 1.01281669296296
};
\addplot [line width=1.08pt, color1, dash pattern=on 5pt off 1pt on 1pt off 1pt]
table {%
0 0.4813497
1 0.4732752
2 0.499475033333333
3 0.443555971428571
4 0.488324885714286
5 0.454463455555556
6 0.479707542857143
7 0.568524898888889
8 0.494487453333333
9 0.391235957142857
10 0.45038886
11 0.399836436190476
12 0.50860194
13 0.432797114285714
14 0.425635785714286
15 0.412645871428571
16 0.608018266666667
17 0.433450180952381
18 0.440650351428571
};
\addplot [line width=1.08pt, color2,  dash pattern=on 4pt off 4pt on 6pt off 6pt]
table {%
0 0.613447066666667
1 0.795072888888889
2 0.701394977777778
3 0.896354614814815
4 0.880042344444444
5 1.12879274
6 1.24552027111111
7 1.39121558518519
8 1.5626079005291
9 1.67910769444444
10 1.83678835
11 1.85784420952381
12 1.91835895555556
13 1.9258249984127
14 2.00516614631874
15 2.12992818492063
16 2.14037928536155
17 2.22580594574685
18 2.29147895479582
};
\end{groupplot}

\end{tikzpicture}}	&
			\scalebox{.8}{
\begin{tikzpicture}[font=\small]

\definecolor{color0}{rgb}{0.12156862745098,0.466666666666667,0.705882352941177}
\definecolor{color1}{rgb}{1,0.498039215686275,0.0549019607843137}
\definecolor{color2}{rgb}{0.172549019607843,0.627450980392157,0.172549019607843}

\begin{groupplot}[group style={group size=1 by 2, vertical sep=2cm}]
\nextgroupplot[
legend cell align={left},
legend style={fill opacity=0.8, draw opacity=1, text opacity=1, draw=white!80!black},
scaled x ticks=manual:{}{\pgfmathparse{#1}},
tick align=outside,
tick pos=left,
height=6cm,
width=9cm,
xlabel={Number of drivers},
title={$\bar{S}=1000$ m},
x grid style={white!69.0196078431373!black},
xmin=-0.5, xmax=18.5,
xtick style={color=black},
xtick={0,1,2,3,4,5,6,7,8,9,10,11,12,13,14,15,16,17,18},
xticklabels={4,,8,,12,,16,,20,,24,,28,,32,,36,,40},
y grid style={white!69.0196078431373!black},
ylabel={Driver success rate},
ymin=0.0307692307692308, ymax=1.04615384615385,
ytick style={color=black}
]
\addplot [line width=1.08pt, color0, dash pattern=on 4pt off 1pt on 4pt off 1pt on 1pt off 1pt]
table {%
0 0.666666666666667
1 0.555555555555556
2 0.611111111111111
3 0.537037037037037
4 0.527777777777778
5 0.516666666666667
6 0.481481481481482
7 0.481481481481481
8 0.44973544973545
9 0.426587301587302
10 0.388888888888889
11 0.408950617283951
12 0.390123456790123
13 0.366666666666667
14 0.342424242424242
15 0.319023569023569
16 0.342592592592593
17 0.322649572649573
18 0.307692307692308
};
\addlegendentry{\gls{self}}
\addplot [line width=1.08pt, color1, dash pattern=on 5pt off 1pt on 1pt off 1pt]
table {%
0 0.611111111111111
1 0.5
2 0.574074074074074
3 0.444444444444444
4 0.444444444444444
5 0.361111111111111
6 0.355555555555556
7 0.333333333333333
8 0.261904761904762
9 0.273809523809524
10 0.236111111111111
11 0.270061728395062
12 0.2
13 0.222222222222222
14 0.219191919191919
15 0.196127946127946
16 0.175925925925926
17 0.184472934472934
18 0.156898656898657
};
\addlegendentry{\gls{greed}}
\addplot [line width=1.08pt, color2, dash pattern=on 4pt off 4pt on 6pt off 6pt]
table {%
0 1
1 1
2 1
3 1
4 0.916666666666667
5 0.838888888888889
6 0.77037037037037
7 0.685185185185185
8 0.687830687830688
9 0.573412698412698
10 0.541666666666667
11 0.5
12 0.490123456790123
13 0.444444444444444
14 0.448484848484848
15 0.366161616161616
16 0.37962962962963
17 0.357549857549858
18 0.357142857142857
};
\addlegendentry{\gls{vcg}}

\nextgroupplot[
tick align=outside,
tick pos=left,
height=6cm,
width=9cm,
title={$\bar{S}=2000$ m},
x grid style={white!69.0196078431373!black},
xlabel={Number of drivers},
xmin=-0.5, xmax=18.5,
xtick style={color=black},
xtick={0,1,2,3,4,5,6,7,8,9,10,11,12,13,14,15,16,17,18},
xticklabels={4,,8,,12,,16,,20,,24,,28,,32,,36,,40},
y grid style={white!69.0196078431373!black},
ylabel={Driver success rate},
ymin=0.0307692307692308, ymax=1.04615384615385,
ytick style={color=black}
]
\addplot [line width=1.08pt, color0, dash pattern=on 4pt off 1pt on 4pt off 1pt on 1pt off 1pt, forget plot]
table {%
0 0.666666666666667
1 0.555555555555556
2 0.611111111111111
3 0.537037037037037
4 0.527777777777778
5 0.538888888888889
6 0.485185185185185
7 0.481481481481481
8 0.481481481481481
9 0.426587301587302
10 0.388888888888889
11 0.422839506172839
12 0.401234567901235
13 0.388888888888889
14 0.361616161616162
15 0.376262626262626
16 0.407407407407407
17 0.366096866096866
18 0.371794871794872
};
\addplot [line width=1.08pt, color1, dash pattern=on 5pt off 1pt on 1pt off 1pt]
table {%
0 0.611111111111111
1 0.5
2 0.574074074074074
3 0.444444444444444
4 0.444444444444444
5 0.361111111111111
6 0.355555555555556
7 0.333333333333333
8 0.261904761904762
9 0.273809523809524
10 0.236111111111111
11 0.270061728395062
12 0.2
13 0.222222222222222
14 0.219191919191919
15 0.196127946127946
16 0.175925925925926
17 0.184472934472934
18 0.156898656898657
};
\addplot [line width=1.08pt, color2,  dash pattern=on 4pt off 4pt on 6pt off 6pt]
table {%
0 1
1 1
2 1
3 1
4 1
5 1
6 1
7 1
8 1
9 1
10 0.986111111111111
11 0.96141975308642
12 0.940740740740741
13 0.922222222222222
14 0.905050505050505
15 0.892255892255892
16 0.87962962962963
17 0.868945868945869
18 0.858363858363858
};
\end{groupplot}

\end{tikzpicture}}	 \\
			\small (a) Search times & \small(b) Success rates \\
		\end{tabular}
		\caption{Impact of VCG pricing on average drivers search times and success rates}
			\label{fig:metrics}
	\end{figure}
%
	Analyzing Figures~\ref{fig:metrics}.a \& \ref{fig:metrics}.b, our results show that coordination through VCG pricing slightly increases the average driven search time needed by its drivers to reach an available station for each platform compared to a selfish or a greedy behavior. However, lower driver search times in selfish settings (\gls{self}\&\gls{greed}) come at the expense of significantly lower success rates, which highlights the need of coordination to decrease station visits conflicts. 
		
	\begin{res}
		Platform coordination mildly increases the average driving time of drivers in the system by 32 seconds on average, compared to local coordination (\gls{self}), and by 47 seconds, compared to uncoordinated drivers (\gls{greed}).
		These longer searches significantly reduce the number of station visit conflicts, leading to a success rate increase of 33\% compared to \gls{self}, and of 49\% compared to \gls{greed}.
	\end{res}

	\subsection{A-priori weights optimization}\label{subsec:off_res_weights}
		
	In the following, we analyze the impact of optimizing weights prior to applying offline weighted VCG pricing by solving the optimization problem described in Section~\ref{subsec:offvcg_weighted}.
	Table~\ref{tab:diag_weights} summarizes the number of instances in which unweighted VCG pricing ("VCG beneficial") benefits all platforms, i.e., each platform obtains lower cost with VCG than without, as well as the number of instances in which VCG pricing benefits all platforms only if players are optimally weighted \mbox{("Weighted VCG beneficial")}.
	Furthermore, the table shows the number of instances for which optimal weights could not be derived \mbox{("Infeasible weights")}, and the number of instances for which the weights optimization problem could not be solved within the computational time limit \mbox{("Not solved")} of 120 minutes. We differentiate results between low ($\bar{S}=1000$ m) and high ($\bar{S}=2000$ m) search radius.

\begin{table}[!b]
  \centering
  \caption{Number of test instances that improve with weighted VCG over unweighted VCG, respectively are infeasible, or unsolvable}
     {\setlength{\tabcolsep}{0.08cm}\footnotesize
    \begin{tabular}{ccccc}
    \toprule
    \multicolumn{1}{c}{} & \multicolumn{4}{c}{Number of instances} \\
	\cmidrule{2-5}    $\bar{S}$ & \multicolumn{1}{p{8em}}{VCG beneficial} & \multicolumn{1}{p{14em}}{Weighted VCG beneficial} & \multicolumn{1}{p{6em}}{Infeasible weights} & \multicolumn{1}{p{6em}}{Not solved} \\
    \midrule
    \multicolumn{1}{r}{1000} & 23    & 14    & 83    & 51 \\
    \multicolumn{1}{r}{2000} & 125   & 5     & 24    & 17 \\
    \midrule
    total & 148 (43\%)   & 19 (6\%)   & 107 (31\%)  & 68 (20\%) \\
    \bottomrule
    \end{tabular}}%
  \label{tab:diag_weights}%
\end{table}%

%
	As can bee seen, weighted VCG pricing increases the total number of instances in which all platforms benefit from \gls{vcg} participation from 43\% to 49\%. 
	%
	There remain 31\% of test instances that yield no feasible weights and 20\% of instances for which the weights optimization problem could not be solved within the limited computational time.
	Focusing on test instances that yield infeasible weights, either unweighted \gls{vcg} already benefits all platforms, or the ratio of available stations per driver is too low to distribute all drivers across stations. In the latter case, prices will be high for any weights values for at least one platform, which accordingly results in higher payoff with \gls{vcg} than with \gls{self}.
	In a limited number of cases (6\% of the total number of test instances), the ratio of stations per driver is low enough such that unweighted \gls{vcg} does not benefit all platforms, but large enough such that a minor reallocation of stations to platforms ensures that all platforms benefit from \gls{vcg} over \gls{self}. Here, platforms who initially did not benefit from \gls{vcg} have larger weights than other platforms.
	Moreover, we observed that the range of improvement through weights optimization depends on the search radius, with 60\% of the instances benefiting from weighted VCG in a low search radius ($\bar{S}=1000$ m) setting against only 4\% benefiting in a large search radius ($\bar{S}=2000$ m) setting. In the latter case, initial benefits of VCG pricing are higher for a larger search radius ($\bar{S}=2000$ m) due to the higher number of available stations per driver, which in turn decreases the improvement potential.
	
	\begin{res}
		Optimizing platforms' weights increases the number of instances in which all platforms benefit from participating in \gls{vcg} from 43\% to 49\%.
	\end{res}

	To further detail the impact of weighting players, Figures~\ref{fig:weights}a-\ref{fig:weights}c in Appendix~\ref{app:add_num_results} show the averaged payoff per platform and scenario in both unweighted and weighted cases, depending on the total number of drivers. 
%
%
	Platform C appears to benefit the most from the weight optimization in the \gls{small} scenario, when managing only 20\% of the total number of drivers. Weighting platforms slightly benefits the system by decreasing a platform's payoff.
	However, results show little differences between weighted platforms' payoffs and unweighted platforms' payoffs, and no clear trend on the weights values depending on the scenario. Accordingly, we focus on the unweighted platforms setting in the following analyses.

	\subsection{Online allocation results}\label{subsec:on_res}
	
	The following section analyzes the benefit of \gls{vcg} coordination with online charging requests.
%
 	Preliminary studies (cf. Appendix~\ref{app:add_num_results}) show a similar effect of driver imbalance on a platform's payoff in an online compared to an offline setting.
	Accordingly, we focus the results discussion on the balanced platforms scenario \gls{all} to evaluate the performances of \gls{vcg} and compare the benefits with respect to a perfect-information benchmark (\textit{OFF}).

Figure~\ref{fig:dd_rel} shows the social cost deviation between the data-driven online policy (\gls{vcg}-dd) and the greedy online policy (\gls{vcg}-greedy), computed as follows $\Delta(\hat{\alpha}) = \nicefrac{(\hat{\alpha}_{\text{dd}}  - \hat{\alpha}_{\text{greedy}})}{\hat{\alpha}_{\text{greedy}}}$, with $\hat{\alpha}_{\text{dd}}$, respectively $\hat{\alpha}_{\text{greedy}}$, being the realized social cost for the data-driven, respectively greedy, online policy.
Figure~\ref{fig:dd_abs} compares the search cost distribution obtained with \gls{vcg}-dd and \gls{vcg}-greedy online policies against the perfect-information benchmark (\textit{OFF}), per number of drivers, for a small ($\bar{S}=1000$ meters) and large search radius ($\bar{S}=2000$ meters).
%


\begin{figure}[!b]
	\scalebox{1}{\input{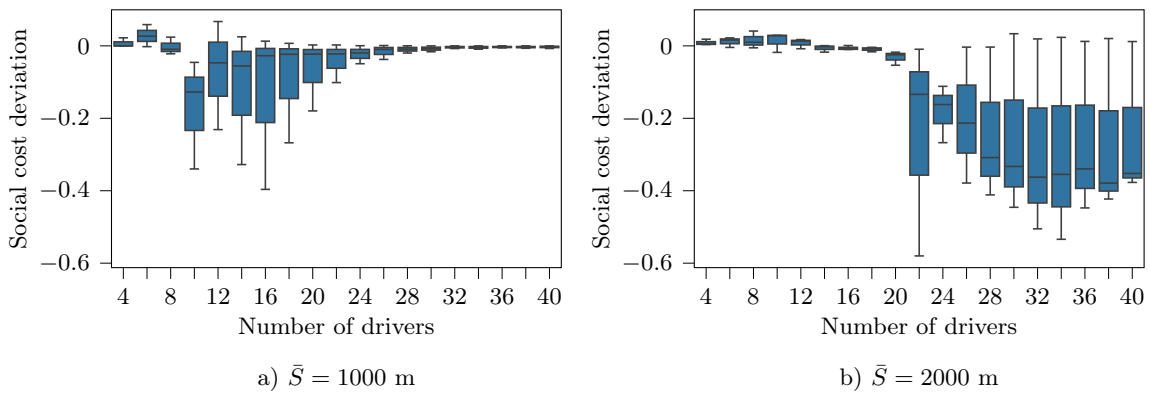}}
	\caption{Social cost deviation  $\Delta(\hat{\alpha})$ between the \gls{vcg}-dd and \gls{vcg}-greed online policies} \label{fig:dd_rel}
\end{figure}

\begin{figure}[!t]
	\scalebox{1}{\input{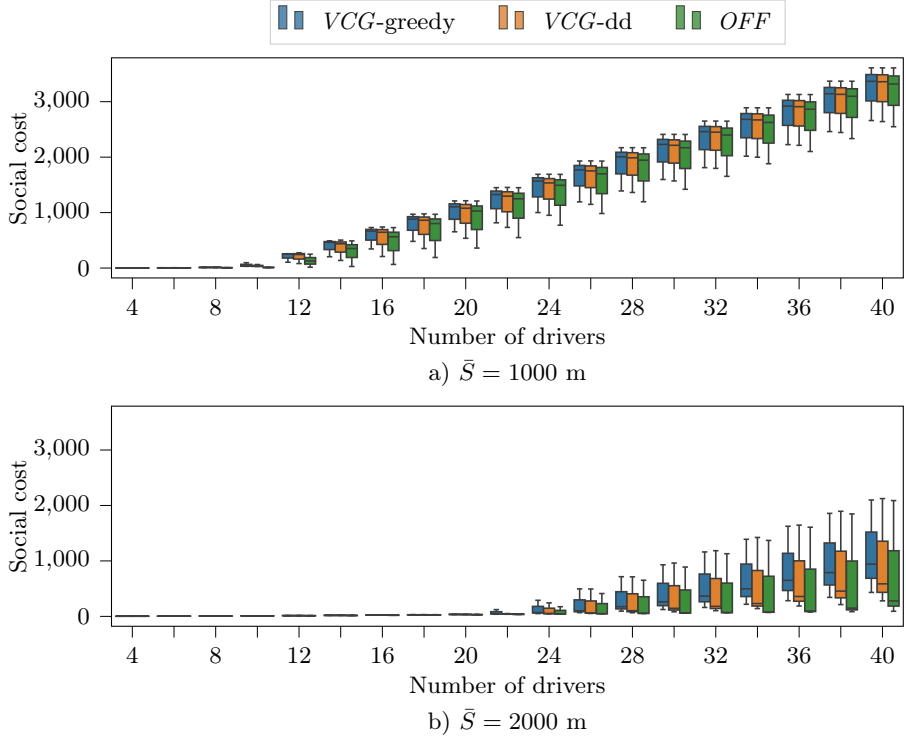}}
	\caption{Social cost obtained with the \gls{vcg}-dd and \gls{vcg}-greedy online policies, and the perfect-information benchmark \textit{OFF}}\label{fig:dd_abs}
	\fnote{\footnotesize Results are aggregated over all values of $s^r$ with $s^r\{300,700,110\}$ m.}		
\end{figure}

%
%
%
	Our results (cf. Figure~\ref{fig:dd_rel}) show an improvement of the data-driven policy over the greedy allocation policy in an online setting, due to a better anticipation of possible visit conflicts of the data-driven policy. 
	For $\bar{S}=1000$, the averaged improvement over all values of $s^r$ is 23\% in the best case ($N=10$ drivers), whereas for $\bar{S}=2000$ it increases to 34\% ($N=22$ drivers).
%
	As can bee seen, the benefit of a data-driven policy is highest for a small number of drivers in case of a small search radius ($\bar{S}=1000$ meters) but highest for a large number of drivers in case of a large search radius ($\bar{S}=2000$ meters).
	In the former case, the social outcome for a higher number of drivers is generally bad due to a high number of unavoidable station visit conflicts.
	Here, the bottleneck significantly limits the improvement potential, such that the two online policies and the perfect-information setting (\textit{OFF}) yield equally bad outcomes (cf. Figure~\ref{fig:dd_abs}).
	In contrast, the likelihood of station visit conflicts is low for a small number of drivers in the latter case, such that a greedy policy performs already nearly as good as a perfect-information setting allocation (\textit{OFF}). In this case, there will always be enough available stations to allocate to incoming drivers, which will not increase the social cost with penalty costs.

	\begin{res}
		A data-driven online assignment policy improves the social outcome when the improvement potential is highest, i.e., for a small number of drivers with small search radius or for a large number of drivers with large search radius. 
	\end{res}

	\begin{res}
		A data-driven online assignment policy decreases the social cost on average by 14\% given a large search radius, and yields a maximum reduction of 55\% for $s^r=700$ meters and $32$ drivers, compared to a greedy online assignment policy.
	\end{res}


Table~\ref{tab:delta_impact} shows the deviation of the realized social cost obtained with \gls{self}, \gls{greed}, and \gls{vcg}-greedy compared to the \gls{vcg}-dd online policy.
For \gls{self} and \gls{greed}, we detail results for a small time span between two driver requests ($\Delta_t=0.5$ minutes), an average time span ($\Delta_t=1.5$ minutes) and a large time span ($\Delta_t=2.5$ minutes). For both \gls{vcg}-greedy and \gls{vcg}-dd online policies, results are independent of $\Delta_t$ as the central decision-maker is aware of charging station' availability. 
%
As can be seen, both coordinated online assignment policies significantly outperform the two naive benchmarks.
In line with Figure~\ref{fig:dd_rel}, results further show the superiority of the online \gls{vcg}-dd policy over the online \gls{vcg}-greedy policy, particularly in cases with widespread drivers, i.e., for a large search radius ($\bar{S} \geq $ 2000 m) or for a large departing area  ($s^r \geq 700$ m). 
Our results further show that the benefit of coordinating platforms increases with the requests frequency and are highest for $\Delta_t=0.5$ minutes.
We note that for a lower request frequency, i.e., $\Delta_t=2.5$ minutes, a platform does not benefit from locally coordinating its charging demand. Here, the realized social outcome is better when EV drivers greedily decide on their station visits (\gls{greed}) compared to when navigation platforms locally coordinate their charging demand (\gls{self}). 
%

\begin{table}[!tp]
\centering
  \caption{Social cost deviation for \gls{self}, \gls{greed}, and \gls{vcg}-greedy compared to \gls{vcg}-dd} \label{tab:delta_impact}
    {\setlength{\tabcolsep}{0.08cm}
    	\small
    \begin{tabular}{crrrrrrrrrrr}
   	\toprule
    \multicolumn{1}{c}{$S$} & \multicolumn{1}{c}{$s^r$} &       & \multicolumn{1}{c}{\gls{vcg}-greedy} &       & \multicolumn{3}{c}{\gls{self}} &       & \multicolumn{3}{c}{\gls{greed}} \\
\cmidrule{6-8}\cmidrule{10-12}          &       &       &       &       & \multicolumn{1}{c}{$\Delta_t=0.5$} & \multicolumn{1}{c}{$\Delta_t=1.5$} & \multicolumn{1}{c}{$\Delta_t=2.5$} &       & \multicolumn{1}{c}{$\Delta_t=0.5$} & \multicolumn{1}{c}{$\Delta_t=1.5$} & \multicolumn{1}{c}{$\Delta_t=2.5$} \\
    \midrule
    \multirow{3}[1]{*}{1000} & 300   &       & -0.17\% &       & 15.6\% & 12.2\% & 10.7\% &       & 34.4\% & 11.9\% & 6.20\% \\
          & 700   &       & 0.95\% &       & 13.3\% & 9.23\% & 7.63\% &       & 19.0\% & 8.81\% & 5.31\% \\
          & 1100  &       & 4.23\% &       & 37.6\% & 25.0\% & 20.5\% &       & 51.9\% & 24.3\% & 15.1\% \\
          &       &       &       &       &       &       &       &       &       &       &  \\
    \multirow{3}[0]{*}{2000} & 300   &       & -0.19\% &       & 193\% & 167\% & 155\% &       & 259\% & 171\% & 114\% \\
          & 700   &       & 77.5\% &       & 994\% & 778\% & 670\% &       & 1199\% & 780\% & 536\% \\
          & 1100  &       & 45.6\% &       & 1571\% & 1185\% & 1004\% &       & 1895\% & 1195\% & 811\% \\
     \bottomrule
    \end{tabular}}%
	 \fnote{The table shows the search cost deviation computed as $\Delta(\hat{\alpha}) = \nicefrac{(\hat{\alpha}_{\text{op}}  - \hat{\alpha}_{\text{dd}})}{\hat{\alpha}_{\text{dd}}}$, with $\hat{\alpha}_{\text{dd}}$, being the realized social cost for the data-driven online policy and $\hat{\alpha}_{\text{op}}$ being the realized social cost for \gls{self}, \gls{greed}, and \gls{vcg}-greedy respectively.}
\end{table}%

	\begin{res}
		Online coordination with \gls{vcg} pricing yields a significant system benefit compared to an uncoordinated setting by decreasing the social cost by 42\% when platforms act selfishly (\gls{self}), and 44\% when drivers act selfishly (\gls{greed}).
	\end{res}
	
	\begin{res}
		 Local charging demand coordination (\gls{self}) can yield a higher social cost than without any coordination at all (\gls{greed}).
	\end{res}

	Finally, Figure~\ref{fig:payoff} compares the realized platform payoff, averaged over all demand realizations and platforms, normalized by the number of navigated drivers per platform for each online setting, against the offline benchmark (\textit{OFF}).
	These results highlight the trade-off that exists between the additional cost induced by coordination via VCG pricing and the cost reduction due to a system-optimized allocation.
%
	 As can be seen, VCG pricing mostly benefits a platform when the ratio of stations compared to the number drivers in the system is large enough, as otherwise the price paid by a platform increases due to the platform's negative impact on the system. For a large search radius ($\bar{S}=2000$ meters), the online \gls{vcg}-dd policy outperforms any naive allocation strategy, independent of the total number of drivers in the system.
	\begin{figure}[tp]
		\scalebox{.8}{
\begin{tikzpicture}

\definecolor{color0}{rgb}{0.12156862745098,0.466666666666667,0.705882352941177}
\definecolor{color1}{rgb}{1,0.498039215686275,0.0549019607843137}
\definecolor{color2}{rgb}{0.172549019607843,0.627450980392157,0.172549019607843}
\definecolor{color3}{rgb}{0.83921568627451,0.152941176470588,0.156862745098039}
\definecolor{color4}{rgb}{0.580392156862745,0.403921568627451,0.741176470588235}

\begin{groupplot}[group style={group size=2 by 1, horizontal sep=2cm}]
\nextgroupplot[
legend cell align={left},
legend style={fill opacity=0.8, draw opacity=1, text opacity=1, at={(0.3,1.06)}, anchor=south west, draw=white!80!black, legend columns=5, /tikz/every even column/.append style={column sep=0.5cm}, /tikz/every odd column/.append style={column sep=0.15cm}},
scaled x ticks=manual:{}{\pgfmathparse{#1}},
tick align=outside,
tick pos=left,
title={a) $S = 1000$ m},
title style={at={(0.5,-0.3)},anchor=north,yshift=-0.2},
width=10cm,
height=7cm,
x grid style={white!69.0196078431373!black},
xlabel={Number of drivers},
xmin=-0.5, xmax=18.5,
xtick style={color=black},
xtick={0,1,2,3,4,5,6,7,8,9,10,11,12,13,14,15,16,17,18},
xticklabels={4,,8,,12,,16,,20,,24,,28,,32,,36,,40},
y grid style={white!69.0196078431373!black},
ylabel={Average player's payoff},
ymin=-5.62382852916267, ymax=130.840946856149,
ytick style={color=black}
]
\addplot [line width=1.08pt, color0, dash pattern=on 4pt off 1pt on 4pt off 1pt on 1pt off 1pt]
table {%
0 31.5036111111111
1 38.927037037037
2 42.6844444444444
3 45.6481111111111
4 48.7844444444444
5 51.609126984127
6 54.7064583333333
7 57.6513580246914
8 61.2286111111111
9 64.2269191919192
10 67.3774074074074
11 69.9125213675214
12 72.4902380952381
13 74.7008888888889
14 76.7590277777778
15 78.5285947712418
16 80.2610185185185
17 81.7983040935672
18 83.0898611111111
};
\addlegendentry{\gls{greed}}
\addplot [line width=1.08pt, color1, dash pattern=on 5pt off 1pt on 1pt off 1pt]
table {%
0 29.7844444444444
1 39.3731481481481
2 41.16375
3 44.3851111111111
4 49.3837037037037
5 51.2372222222222
6 53.578125
7 57.0339506172839
8 59.2069444444444
9 61.624696969697
10 64.7506481481482
11 67.2859401709402
12 69.9278571428571
13 72.4268148148148
14 74.7734722222222
15 76.7722222222222
16 78.7700617283951
17 80.4752339181287
18 81.9806111111111
};
\addlegendentry{\gls{self}}
\addplot [line width=1.08pt, color2, dash pattern=on 4pt off 4pt on 6pt off 6pt]
table {%
0 0.733459915866667
1 1.08700502862222
2 3.08532961047407
3 15.1536110512074
4 49.9917234508222
5 81.6578632720311
6 100.318192056889
7 107.197621802341
8 112.894047805526
9 116.657512737667
10 119.977193798133
11 121.383288590189
12 121.659152558628
13 121.640632937982
14 121.259313025259
15 120.618304119745
16 120.409284468156
17 120.31287617387
18 120.130151868608
};
\addlegendentry{\gls{vcg}-greedy}
\addplot [line width=1.08pt, color3, dash pattern=on 1pt off 1pt]
table {%
0 0.755566382
1 1.15755987124444
2 3.06361742616296
3 11.1948610642519
4 48.4167400687333
5 75.8658056935333
6 91.1884176935585
7 99.275457338163
8 105.878615186959
9 111.699620462662
10 116.6557591303
11 119.048345092396
12 120.052750348306
13 120.234959617644
14 120.543306269413
15 119.784211644844
16 119.769725321193
17 119.593094504726
18 119.44351832727
};
\addlegendentry{\gls{vcg}-dd}
\addplot [line width=1.08pt, color4, dash pattern=on 7pt off 2pt]
table {%
0 0.695399568088889
1 0.856700558444444
2 1.03916631782222
3 1.68105919174074
4 31.0555401268889
5 60.8485500566356
6 76.3444555713452
7 86.5092621464593
8 94.2554853328931
9 100.913792479995
10 107.333913702933
11 110.540489610706
12 112.479116688028
13 113.690176399316
14 114.377128988278
15 114.925981146382
16 115.376088480156
17 115.740345552186
18 116.066535208813
};
\addlegendentry{\textit{OFF}}

\nextgroupplot[
tick align=outside,
tick pos=left,
title={b) $S = 2000$ m},
title style={at={(0.5,-0.3)},anchor=north,yshift=-0.2},
width=10cm,
height=7cm,
x grid style={white!69.0196078431373!black},
xlabel={Number of drivers},
xmin=-0.5, xmax=18.5,
xtick style={color=black},
xtick={0,1,2,3,4,5,6,7,8,9,10,11,12,13,14,15,16,17,18},
xticklabels={4,,8,,12,,16,,20,,24,,28,,32,,36,,40},
y grid style={white!69.0196078431373!black},
ylabel={Average player's payoff},
ymin=-5.62382852916267, ymax=130.840946856149,
ytick style={color=black}
]
\addplot [line width=1.08pt, color0, dash pattern=on 4pt off 1pt on 4pt off 1pt on 1pt off 1pt, forget plot]
table {%
0 32.6858333333333
1 39.2942592592593
2 42.5066666666667
3 45.007
4 48.2215740740741
5 50.2796031746032
6 52.0690972222222
7 52.7954320987654
8 54.3609444444444
9 55.4917676767677
10 56.7486574074074
11 57.7354273504273
12 58.4174206349206
13 59.1246666666667
14 59.4454861111111
15 59.9721895424837
16 60.5553395061728
17 60.967514619883
18 61.1063055555556
};
\addplot [line width=1.08pt, color1,forget plot]
table {%
0 30.9597222222222
1 39.7940740740741
2 40.76375
3 44.0258888888889
4 49.0556481481482
5 50.4673809523809
6 52.0507638888889
7 53.9112345679012
8 54.2139444444444
9 54.4531818181818
10 55.8646759259259
11 55.9824358974359
12 56.1264682539682
13 57.0392962962963
14 56.7348958333333
15 57.1033333333333
16 57.7024382716049
17 57.5470467836257
18 57.7625
};
\addplot [line width=1.08pt, color2, dash pattern=on 4pt off 4pt on 6pt off 6pt, forget plot]
table {%
0 0.732553887866667
1 0.907965768222222
2 1.19979168176296
3 1.47710692461481
4 1.87406179728889
5 2.22488251872
6 2.50679444489481
7 2.70396214968889
8 3.05522345414815
9 6.91126938490476
10 14.4296299601778
11 22.6775463256222
12 32.1699236530193
13 40.6598391283111
14 49.4772937782877
15 58.2936898954795
16 63.912222593963
17 67.6185610198593
18 72.0842876933797
};
\addplot [line width=1.08pt, color3, dash pattern=on 1pt off 1pt, forget plot]
table {%
0 0.752585920844445
1 0.932564912
2 1.23703185724444
3 1.51213747775556
4 1.8967553204
5 2.19715466840889
6 2.47864966781037
7 2.66006141608889
8 2.87598423240635
9 3.38066998810476
10 11.4492863679333
11 20.5945745694975
12 28.4735490444079
13 36.4286026272711
14 42.5657255374618
15 49.1599655505347
16 53.7297335266148
17 56.3904883481413
18 60.017229121724
};
\addplot [line width=1.08pt, color4, dash pattern=on 7pt off 2pt, forget plot]
table {%
0 0.690626344444444
1 0.814961919244444
2 1.04254403285926
3 1.25403796766667
4 1.59477230177778
5 1.89797966390667
6 2.15763361888
7 2.34999662940741
8 2.52075484113228
9 2.70119063301746
10 7.97464538326667
11 16.6339121648852
12 24.0933760122617
13 30.4311737825067
14 36.1633377186667
15 40.9940176108869
16 42.3643402583926
17 43.6780694049727
18 46.7298942619067
};
\end{groupplot}

\end{tikzpicture}}		
		\caption{Averaged platform's payoff for \gls{self}, \gls{greed}, \gls{vcg}-dd, \gls{vcg}-greedy, and \textit{OFF}.}
		\label{fig:payoff}
	\end{figure}

	\begin{res}
		\gls{vcg} pricing significantly outperforms any naive strategy (\gls{self}, \gls{greed}) for larger search areas ($\bar{S}=2000$m) by decreasing a platform's payoff on average by 58\% compared to \gls{self} and by 61\% compared to \gls{greed}.
%
	\end{res}
	
	\begin{res}
		%
		In some cases, \gls{vcg} pricing may slightly increase a platform's payoff, when the room for improving the social outcome with coordination is small, e.g, for a large number of drivers in a small search area.			
	\end{res}

	\section{Conclusion} \label{sec:conclusion}

	In this paper, we analyzed the dynamics between several self-interested navigation service platforms that seek to best allocate charging stations to \gls{ev} drivers.
	 In this context, we studied the problem of conflicting charging station assignments realized by independent platforms from a game-theoretical perspective and introduced the \gls{fcsag} game. 
%
	We showed that the game can neither be represented as a congestion game, nor admits a guaranteed \gls{pne}.
%
%
	To steer the system towards a stable and optimal social outcome, we studied the VCG mechanism in both offline and online settings, such that coordinated platforms' assignment decisions benefit the overall system. We further discussed how to extend the mechanism to account for heterogeneously weighted players in both settings. In the online setting, we introduced a data-driven online allocation policy.	
%
	We analyzed the benefits of implementing our mechanisms by conducting a case study for the city of Berlin, and showed that coordination decreases the social cost by up to 52\% in the offline setting and by up to 42\% in the online setting.
	We further showed in an offline setting that optimized players' weights may increase the likelihood that participants benefit from participating in the VCG mechanism.
	Finally, our results showed that the online data-driven allocation policy performs on average slightly better than a myopic policy, but may significantly improve the social outcome by up to 55\% in congested scenarios.
%


%
\singlespacing{
\bibliographystyle{model5-names}
\bibliography{main}} 
%
\onehalfspacing
\begin{appendices}
	\normalsize
	\section{Proofs}\label{app:proofs}
\FloatBarrier

\begin{proof}[Proof of Proposition \ref{prop:noPNE}]
	We show the non-guaranteed PNE existence by finding a game instance that does not possess any. 
	Figure~\ref{fig:game} visualizes such a game instance. The instance comprises three stations $a$, $b$, and $c$ for two players $p_1$ and $p_2$: the first player has two drivers $d_1$ and $d_2$, the second player has only one driver $e_1$.
	None of the possible game outcomes corresponds to a Nash equilibrium.
	If each player individually optimizes the assignment of its drivers to the available station, we obtain the following profile $((b,c),(c))$, with $c$ conflicting. From there, $p_2$ should re-assign its unique driver to $b$, which is then conflicting. In this case, $p_1$ is better off reassigning $d_1$ to $c$ and $d_2$ to $a$, as the total cost (=17) for strategy $(c,a)$ is lower than the total cost (=18) for strategy $(a,c)$. However, $p_1$ should now re-assign its unique driver to $c$ such that $p_1$'s best response is again its individually optimized assignment solution. The last strategy profile now equals the initial strategy profile, such that the path $P = ((b,c), c), ((b,c),b), ((c,a),b), ((c,a),c), ((b,c),c)$ of length $5$ is an improvement path. Additionally, starting from any other strategy profile not included in $P$, one may reach $P$ within at most two best responses moves, such that there exists no sink in the best response dynamics graph and, accordingly, no \gls{pne}.
	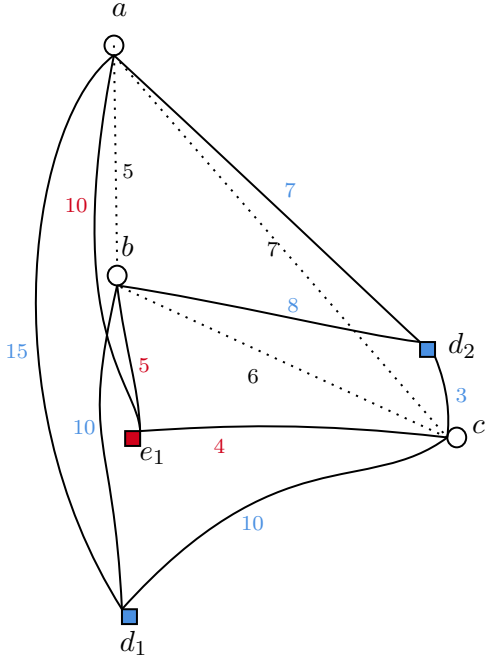
\begin{figure}[btp]
		\caption{G1' instance with no PNE}
		\label{fig:game}
		\tikzset{every picture/.style={line width=0.75pt}} 

\begin{tikzpicture}[x=0.75pt,y=0.75pt,yscale=-1,xscale=1]

\draw   (347.1,154.62) .. controls (347.1,151.88) and (349.21,149.66) .. (351.81,149.66) .. controls (354.41,149.66) and (356.53,151.88) .. (356.53,154.62) .. controls (356.53,157.36) and (354.41,159.58) .. (351.81,159.58) .. controls (349.21,159.58) and (347.1,157.36) .. (347.1,154.62) -- cycle ;
\draw   (348.7,270.36) .. controls (348.7,267.62) and (350.81,265.4) .. (353.41,265.4) .. controls (356.01,265.4) and (358.13,267.62) .. (358.13,270.36) .. controls (358.13,273.1) and (356.01,275.32) .. (353.41,275.32) .. controls (350.81,275.32) and (348.7,273.1) .. (348.7,270.36) -- cycle ;
\draw  [dash pattern={on 0.84pt off 2.51pt}]  (351.81,154.62) -- (353.41,265.4) ;
\draw  [dash pattern={on 0.84pt off 2.51pt}]  (351.81,159.58) -- (518.21,351.8) ;
\draw  [dash pattern={on 0.84pt off 2.51pt}]  (518.21,351.8) -- (353.41,275.32) ;
\draw   (518.21,351.8) .. controls (518.21,349.06) and (520.32,346.84) .. (522.93,346.84) .. controls (525.53,346.84) and (527.64,349.06) .. (527.64,351.8) .. controls (527.64,354.54) and (525.53,356.76) .. (522.93,356.76) .. controls (520.32,356.76) and (518.21,354.54) .. (518.21,351.8) -- cycle ;
\draw    (355.81,438.2) .. controls (287.47,329.31) and (311.81,189.58) .. (351.81,159.58) ;
\draw    (355.81,438.2) .. controls (351.47,346.11) and (334.67,358.47) .. (353.41,275.32) ;
\draw    (355.81,438.2) .. controls (433.87,350.91) and (478.21,381.8) .. (518.21,351.8) ;
\draw    (504.61,303.8) .. controls (489.07,289.67) and (449.87,251.27) .. (351.81,159.58) ;
\draw    (504.61,303.8) .. controls (473.87,300.07) and (406.67,283.27) .. (353.41,275.32) ;
\draw  [fill={rgb, 255:red, 74; green, 144; blue, 226 }  ,fill opacity=1 ] (504.61,303.8) -- (512.09,303.8) -- (512.09,311.27) -- (504.61,311.27) -- cycle ;
\draw  [fill={rgb, 255:red, 74; green, 144; blue, 226 }  ,fill opacity=1 ] (355.81,438.2) -- (363.29,438.2) -- (363.29,445.67) -- (355.81,445.67) -- cycle ;
\draw    (518.21,351.8) .. controls (519.47,341.31) and (518.67,327.27) .. (512.09,311.27) ;
\draw  [fill={rgb, 255:red, 208; green, 2; blue, 27 }  ,fill opacity=1 ] (357.41,348.6) -- (364.89,348.6) -- (364.89,356.07) -- (357.41,356.07) -- cycle ;
\draw    (364.89,348.6) .. controls (364.01,325.13) and (357.87,310.47) .. (353.41,275.32) ;
\draw    (364.89,348.6) .. controls (364.01,325.13) and (324.27,300.07) .. (351.81,159.58) ;
\draw    (364.89,348.6) .. controls (409.07,345.67) and (451.47,344.07) .. (518.21,351.8) ;

\draw (349.2,132.2) node [anchor=north west][inner sep=0.75pt]   [align=left] {$\displaystyle a$};
\draw (354,249) node [anchor=north west][inner sep=0.75pt]   [align=left] {$\displaystyle b$};
\draw (529.2,341.8) node [anchor=north west][inner sep=0.75pt]   [align=left] {$\displaystyle c$};
\draw (354.61,213.01) node [anchor=north west][inner sep=0.75pt]  [font=\scriptsize] [align=left] {$\displaystyle 5$};
\draw (426.61,252.21) node [anchor=north west][inner sep=0.75pt]  [font=\scriptsize] [align=left] {$\displaystyle 7$};
\draw (417.01,316.21) node [anchor=north west][inner sep=0.75pt]  [font=\scriptsize] [align=left] {$\displaystyle 6$};
\draw (296.21,303.41) node [anchor=north west][inner sep=0.75pt]  [font=\scriptsize,color={rgb, 255:red, 74; green, 144; blue, 226 }  ,opacity=1 ] [align=left] {$\displaystyle 15$};
\draw (353.2,447.4) node [anchor=north west][inner sep=0.75pt]   [align=left] {$\displaystyle d_{1}$};
\draw (517.2,297.8) node [anchor=north west][inner sep=0.75pt]   [align=left] {$\displaystyle d_{2}$};
\draw (363.15,355.33) node [anchor=north west][inner sep=0.75pt]   [align=left] {$\displaystyle e_{1}$};
\draw (329.81,341.81) node [anchor=north west][inner sep=0.75pt]  [font=\scriptsize,color={rgb, 255:red, 74; green, 144; blue, 226 }  ,opacity=1 ] [align=left] {$\displaystyle 10$};
\draw (362.61,310.61) node [anchor=north west][inner sep=0.75pt]  [font=\scriptsize,color={rgb, 255:red, 208; green, 2; blue, 27 }  ,opacity=1 ] [align=left] {$\displaystyle 5$};
\draw (413.81,389.81) node [anchor=north west][inner sep=0.75pt]  [font=\scriptsize,color={rgb, 255:red, 74; green, 144; blue, 226 }  ,opacity=1 ] [align=left] {$\displaystyle 10$};
\draw (400.21,351.41) node [anchor=north west][inner sep=0.75pt]  [font=\scriptsize,color={rgb, 255:red, 208; green, 2; blue, 27 }  ,opacity=1 ] [align=left] {$\displaystyle 4$};
\draw (325.81,229.81) node [anchor=north west][inner sep=0.75pt]  [font=\scriptsize,color={rgb, 255:red, 208; green, 2; blue, 27 }  ,opacity=1 ] [align=left] {$\displaystyle 10$};
\draw (435.41,222.61) node [anchor=north west][inner sep=0.75pt]  [font=\scriptsize,color={rgb, 255:red, 74; green, 144; blue, 226 }  ,opacity=1 ] [align=left] {$\displaystyle 7$};
\draw (521.01,325.81) node [anchor=north west][inner sep=0.75pt]  [font=\scriptsize,color={rgb, 255:red, 74; green, 144; blue, 226 }  ,opacity=1 ] [align=left] {$\displaystyle 3$};
\draw (437.01,280.21) node [anchor=north west][inner sep=0.75pt]  [font=\scriptsize,color={rgb, 255:red, 74; green, 144; blue, 226 }  ,opacity=1 ] [align=left] {$\displaystyle 8$};

\end{tikzpicture}
	\end{figure}
\end{proof}

\begin{proof}[Proof of Proposition \ref{prop:approxNE}]
We show the non-existence guarantee of a $\rho$-\gls{pne} with $\rho \leq  \bar{\beta} - \Delta_t$, by using an intermediate game G, defined with the same set of strategies than the \gls{fcsag} game but with new payoff functions. 
We ignore travel times, such that assigning a station induces a penalty cost only for the non-closest drivers, i.e.,
\begin{equation}\label{eq:cost_II}
c_i(k,s) = 
\begin{cases}
\bar{\beta}, & \text{if } s_i(k) = v_0   \\
0 , & \text{if } \forall k' \in \mathcal{D}_j \; \; \forall j \in \mathcal{N}, \;j\neq i \;  (s_j(k') = s_i(k)) \Rightarrow (t_{k',v} \geq t_{k,v})    \\
\bar{\beta}& \text{otherwise } \;.\\
\end{cases}
\end{equation}
Accordingly, player $i$'s payoff can be expressed as $u_i(s) = k\cdot \bar{\beta}$, with $k\in \mathbf{N}$.
%
Similar to Proposition~\ref{prop:noPNE}, G has no guaranteed \gls{pne}, as we can find an instance of G that admits none. For such instance,
\begin{equation}
\forall s \in \mathcal{S}, \forall i \in \mathcal{N}, \exists s'_i \in \mathcal{S}_i, \text{st.  }u_i(s'_i, s_{-i}) < u_i(s)\; .
\end{equation}
When a player can strictly decrease its payoff by switching from strategy $s_i$ to $s'_i$, then its payoff decreases by at least $\bar{\beta}$, such that $u_i(s) - u_i(s'_i, s_{-i}) \geq \bar{\beta}$, which corresponds to one additional player being successfully assigned to a station.

By assumption, each feasible profile $s$ in G is feasible in the \gls{fcsag} game and the previous condition implies that for some \gls{fcsag} game instances, a player can decrease its cost by obtaining at least one more driver successfully assigned to a station. 
In this case, assume for player $i$, that $l$ drivers are successfully assigned to a station in $s_i$. For any alternative strategy improving $s'_i$, this implies at least $l+1$ successfully assigned drivers.
Let us denote with $t_{\text{max}}$ and $t_{\text{min}}$ the highest and lowest driving times between stations and drivers. We let $n_i = |\mathcal{D}_i|$ be the number of drivers belonging to $i$. 
Then, we have 
\begin{equation}
u_i(s_i, s_{-i}) \geq (n_i - l)\cdot t_{\text{min}} + l \cdot \bar{\beta}
\end{equation}
and 
\begin{equation}
u_i(s'_i, s_{-i}) \leq (n_i - l - 1)\cdot t_{\text{max}} +( l - 1)\cdot \bar{\beta}\;,
\end{equation}
such that $i$ decreases its cost by switching from $s_i$ to $s'_i$ by at least $\Delta = \bar{\beta} + (t_{\text{min} } - t_{\text{max}}) $.
Finally, we showed that $\rho \geq \Delta$.

\end{proof}

\begin{proof}[Proof of Proposition~\ref{prop:no_deviation}]

Assuming that all players tell the truth, the principal computes the cost-minimal allocation 
\begin{equation}
a = \argmin_{b \in A}  \sum_{k \in \mathcal{D}}  c(k,b) \; ,
\end{equation}
with $c(k,a)$ as defined in Equation~\eqref{eq:cost_VCG}.
The payoff of player $i$ corresponds to  $\sum_{k \in \mathcal{D}}  c(k,a)  -	h_{-i}$. 
We let $a'$ be the modified allocation, when player~$i$ reassigns its drivers to its stations allocated by the principal. All drivers that do not belong to player~$i$ are still assigned to the same station in $a'$. The payoff of player~$i$ in this case corresponds to 
\begin{equation}
\sum_{k \in \mathcal{D}}  c(k,a')  -	h_{-i}\;.
\end{equation}
However, as allocation~$a$ minimizes the total cost, i.e., the assignment cost of drivers to stations, then 
\begin{equation}
\sum_{k \in \mathcal{D}}  c(k,a') \geq \sum_{k \in \mathcal{D}}  c(k,a) \;,
\end{equation}
such that deviating from the assigned allocation of stations to drivers increases the payoff of a player. 
Concluding, a player has no benefit from assigning its drivers differently to the stations allocated by the principal.
\end{proof}

\begin{proof}[Proof of Proposition~\ref{prop:no_lie}]
First, let us assume that a driver cannot deviate from the received assignment of drivers to stations. Let $\hat{\theta}$ be the reported information, and with a slight abuse of notation let $\hat{k}$ be a reported driver information (i.e., its location), while $k$ denotes the actual driver information.
Let $\hat{a}=f(\hat{\theta})$. 
%
Then a player's payoff corresponds to $ \sum_{k \in \mathcal{D}}  c(k, \hat{a}) - h_{-i} $, as the assignment cost $ c(\hat{k},\hat{a})$ corresponds to the cost with actual driver information, but with allocation $\hat{a}$ computed based on the driver's reported information by the corresponding player, such that $c(\hat{k},\hat{a}) = c(k, \hat{a})$.
We differentiate two cases:
\begin{enumerate}
	
\item[i)] \textit{Correct number of reported drivers:}
If a single player correctly reports its number of drivers but misreports its drivers' location, such that $\hat{\theta} \neq \theta$, then 
\begin{equation}
\sum_{k \in \mathcal{D}}  c(k, \hat{a}) \geq \sum_{k \in \mathcal{D}}  c(k, f(\theta))\;,
\end{equation} 
as $f(\theta)=\argmin_{b \in A}  \sum_{k \in \mathcal{D}}  c(k,b) $, and the player's payoff increases when not telling the truth.

\item[ii)] \textit{Incorrect number of reported drivers:}
If a single player lies on the number of reported drivers, then each non-reported driver $\hat{k}$ does not get any assigned station, which yields a penalty cost for each of them, such that  $c(k, \hat{a}) = \bar{\beta} \geq c(k, a) \; \forall a \neq \hat{a}$. Then, for each extra reported driver $\hat{k}$ the actual driver assignment cost becomes $c(k, \hat{a}) = 0 $. Accordingly, the principal optimizes the allocation of actual drivers on a reduced subset of stations $\mathcal{V}' \subset \mathcal{V}$, i.e., stations that are not assigned to virtual reported drivers, such that
\begin{equation}
\min_{b \in A, \mathcal{V}'} \sum_{k \in \mathcal{D}}  c(k, b) \geq \min_{b \in A, \mathcal{V}} \sum_{k \in \mathcal{D}, \mathcal{V}}  c(k, a)\;.
\end{equation}
In theses two cases of misreported drivers' numbers, a player's payoff is at least smaller when all drivers are correctly reported. Furthermore, following i), a player's payoff is at least smaller when all players' information are correctly reported.
\end{enumerate}
Second, let us assume that a driver can deviate from the received assignment of drivers to stations. 
If a player lies, then its payoff is larger than when telling the truth as shown above, such that 
\begin{equation}
\sum_{k \in \mathcal{D}}  c(k, \hat{a}) + h_{-i} \geq \sum_{k \in \mathcal{D}}  c(k, f(\theta)) + h_{-i} \;,
\end{equation}
as $f(\theta)=\argmin_{b \in A}  \sum_{k \in \mathcal{D}}  c(k,b)$. As its payoff is not minimized when lying, it can deviate from the allocation $\hat{a}$ by reassigning other drivers to its received stations, such that the new allocation $a'$ decreases its payoff obtained with allocation $\hat{a}$. However, the newly obtained allocation $\hat{a}$ cost is at least as large as the minimum cost allocation obtained when telling the truth, such that a player is better off reporting its true information and not reassigning its drivers to different charging stations.
%
		
\end{proof}

\begin{proof}[Proof of Proposition \ref{prop:BNIC}]
	We recall the proof exposed in Theorem 1 from \cite{ParkesSinghEtAl2004}, that transposes the regular VCG incentive-compatibility to an in-expectation incentive-compatibility.
	
	
	Let $t$ be the last decision epoch before the first epoch, where $i$ misreports information $\theta^{t+1}$ (and may further misreport her information), such that  $\hat{\theta}^{t}= \theta^{t}$. By expliciting payment $p_i$ in Equation~\ref{eq:BNIC_1}, omitting $OPT_{\theta_{-i}}$ that does not depend on player $i$, and omitting $\sum_{i \in \mathcal{N}}\sum^{\tau-1}_{\tau=0} d^t_\tau (x_\tau, \pi_\tau(x_\tau))$ that does not depend on realizations later than $t$, we need to show that
	\begin{equation}\label{eq:BINC}
	\begin{aligned}
	&E_{\tau>t }[\sum_{i\in \mathcal{N}}\sum^T_{\tau=t} d^i_\tau (x_\tau, \pi_\tau(x_\tau))]      \leq  E_{\tau>t} [\sum_{i\in \mathcal{N}}\sum^T_{\tau=t} d^i_\tau (\hat{x}_\tau, \pi_\tau(\hat{x}_\tau))]  \; \; \forall \hat{\theta}^{t} \; \forall t \;\\
	\Leftrightarrow\;\; &E_{\tau>t }[\sum^T_{\tau=t} d_\tau (x_\tau, \pi_\tau(x_\tau))]      \leq  E_{\tau>t} [\sum^T_{\tau=t} d_\tau (\hat{x}_\tau, \pi_\tau(\hat{x}_\tau))]  \; \; \forall \hat{\theta}^{t} \; \forall t \;.
	\end{aligned}
	\end{equation}
	The immediate cost for a misreported information depends on the driver's true information, i.e., actual location, and on the decision taken upon the misreported information, such that 
	$$\sum_{i\in \mathcal{N}}\sum^T_{\tau=t} d^i_\tau (\hat{x}_\tau, \pi_\tau(\hat{x}_\tau)) = \sum_{i\in \mathcal{N}}\sum^T_{\tau=t} d^i_\tau ( (\theta_{<\tau}, \hat{a}_{\leq \tau -1}), \pi_\tau(\hat{x}_\tau))\;.$$
	The right part of~\eqref{eq:BINC} corresponds to the expected cost from $x_t$ (as $\hat{x}_t = x_t$ by assumption) following policy $\hat{\pi}$, with $\hat{\pi}(x_\tau) = \pi(\hat{x}_\tau)\; \forall \tau \in [t, ..., T]$, i.e., $V^{\hat{\pi}}(x_t)$.
	The left part of~\eqref{eq:BINC} corresponds to the expected cost from $x_t$ following policy $\pi$,  $V^{\pi}(x_t)$, such that Equation~\eqref{eq:BINC} can be rewritten as 
	$$ V^{\pi}(x_t) \leq V^{\hat{\pi}}(x_t).$$
	Since $\pi$ is the optimal MDP-policy by assumption, Equation~\ref{eq:BINC} holds; otherwise this contradicts $\pi$'s optimality. Finally, we justified that a player has no incentive to reveal false information.
\end{proof}

\begin{proof}[Proof of Proposition \ref{prop:BNIC2}]
	We need to show that the following inequality 
	\begin{equation}\label{eq:BNIC_2}
	\begin{aligned}
	&E_{\tau>t }[v^i(\theta^i, a_{\leq T})  -   \tilde{p}^i(\theta, \pi)] \leq  E_{\tau>t}[ v^i(\theta^i, \hat{a}_{\leq T})  -   \tilde{p}^i(\hat{\theta}, \pi)] \; \; \forall \hat{\theta}^{t} \; \forall t \\ 
	\Leftrightarrow &w_i \cdot E_{\tau>t }[v^i(\theta^i,a_{\leq T})  -   \tilde{p}^i(\theta, \pi)] \leq  w_i \cdot E_{\tau>t}[ v^i(\theta^i,\hat{a}_{\leq T})  -   \tilde{p}^i(\hat{\theta}, \pi)] \; \; \forall \hat{\theta}^{t} \; \forall t \\
	\end{aligned}
	\end{equation}
	holds. By omitting the same terms in \eqref{eq:BNIC_2} than in \eqref{eq:BNIC_1} (see Proposition~\ref{prop:BNIC}), we can show that the following inequality holds
	\begin{equation}\label{eq:BNIC_3}
	\begin{aligned}
	\Leftrightarrow & E_{\tau>t }[\sum_{i\in \mathcal{N}}\sum^T_{\tau=t} w_i \cdot d^i_\tau (x_\tau, \pi_\tau(x_\tau))]      \leq   E_{\tau>t} [\sum_{i\in \mathcal{N}}\sum^T_{\tau=t} w_i \cdot d^i_\tau (\hat{x}_\tau, \pi_\tau(\hat{x}_\tau))]  \; \; \forall \hat{\theta}^{t} \; \forall t \;\\
	\Leftrightarrow & E_{\tau>t }[\sum^T_{\tau=t} \tilde{d}_\tau (x_\tau, \pi_\tau(x_\tau))]      \leq E_{\tau>t} [\sum^T_{\tau=t} \tilde{d}_\tau (\hat{x}_\tau, \pi_\tau(\hat{x}_\tau))]  \; \; \forall \hat{\theta}^{t} \; \forall t \;.
	\end{aligned}
	\end{equation}
	Using the same argument as previously, with $\pi$ being in this case the optimal policy solving the MDP with updated immediate cost $\tilde{d}$, we justified inequality  \eqref{eq:BNIC_3}. Accordingly, a player has no incentive to reveal false information when players are weighted, if the payment rule is weighted too.
\end{proof}

\section{Additional Numerical results}\label{app:add_num_results}

\paragraph{Weights optimization:}

	\begin{figure}[!bp]
	\begin{tabular}{x{1\textwidth}}
		\scalebox{.65}{
\begin{tikzpicture} 

\definecolor{color0}{rgb}{0.12156862745098,0.466666666666667,0.705882352941177}
\definecolor{color1}{rgb}{1,0.498039215686275,0.0549019607843137}

\begin{groupplot}[group style={group size=3 by 1}]
\nextgroupplot[
legend cell align={left},
legend style={
  fill opacity=0.8,
  draw opacity=1,
  text opacity=1,
  at={(0.03,0.97)},
  anchor=north west,
  draw=white!80!black
},
tick align=outside,
tick pos=left,
title={\gls{big}},
x grid style={white!69.0196078431373!black},
xlabel={Number of drivers},
xmin=-0.5, xmax=18.5,
xtick style={color=black},
xtick={0,1,2,3,4,5,6,7,8,9,10,11,12,13,14,15,16,17,18},
xticklabels={4,,8,,12,,16,,20,,24,,28,,32,,36,,40},
y grid style={white!69.0196078431373!black},
ylabel={Normalized player payoff},
ymin=-5.53197659367708, ymax=125.917193133885,
ytick style={color=black}
]
\addplot [line width=1.08pt, color0, dash pattern=on 4pt off 1pt on 4pt off 1pt on 1pt off 1pt]
table {%
0 0.7321472
1 1.1099724
2 0.896994266666666
3 1.27613036666667
4 20.9647478
5 36.1308948166667
6 41.3185177666667
7 51.3419554
8 49.42824464
9 54.7891688444444
10 61.3625101888889
11 71.1867454
12 69.6829251619047
13 68.6815992416667
14 76.1306230833333
15 81.0703764416667
16 78.8460219925926
17 77.1943228066666
18 81.2171535466667
};
\addlegendentry{Unweighted}
\addplot [line width=1.08pt, color1, dash pattern=on 5pt off 1pt on 1pt off 1pt]
table {%
0 0.7321472
1 1.1099724
2 0.896994266666666
3 1.27613036666667
4 20.9963733333333
5 36.1308948166667
6 41.3185177666667
7 51.4251291006862
8 41.5596416671961
9 54.7783904
10 61.3513278222222
11 67.6533975036561
12 69.6692018666667
13 68.67084815
14 76.108516725
15 81.045033375
16 78.8147609481482
17 77.1393907866667
18 81.1396698333333
};
\addlegendentry{Weighted}

\nextgroupplot[
scaled y ticks=manual:{}{\pgfmathparse{#1}},
tick align=outside,
tick pos=left,
title={\gls{all}},
x grid style={white!69.0196078431373!black},
xlabel={Number of drivers},
xmin=-0.5, xmax=18.5,
xtick style={color=black},
xtick={0,1,2,3,4,5,6,7,8,9,10,11,12,13,14,15,16,17,18},
xticklabels={4,,8,,12,,16,,20,,24,,28,,32,,36,,40},
y grid style={white!69.0196078431373!black},
ymin=-5.53197659367708, ymax=125.917193133885,
ytick style={color=black},
yticklabels={}
]
\addplot [line width=1.08pt, color0, dash pattern=on 4pt off 1pt on 4pt off 1pt on 1pt off 1pt, forget plot]
table {%
0 0.7321472
1 1.1099724
2 0.818352622222222
3 1.40610055555556
4 16.0151758166667
5 29.1873700133333
6 41.28381756
7 47.8856537111111
8 47.0987898
9 55.7176286190476
10 61.261450775
11 67.8338122592593
12 65.6091891111111
13 65.2841193333333
14 73.9018553515152
15 81.0227477030303
16 77.7260097111111
17 78.0346347025641
18 79.6571154820513
};
\addplot [line width=1.08pt, color1, dash pattern=on 5pt off 1pt on 1pt off 1pt]
table {%
0 0.7321472
1 1.1099724
2 0.818352622222222
3 1.40610055555556
4 11.1732574333333
5 29.1873700133333
6 41.24520072
7 48.0293684287787
8 47.2431203384249
9 55.708389952381
10 61.253064
11 67.6035513414044
12 65.598515437037
13 59.2737077513906
14 70.3359040102003
15 81.0043163818182
16 77.7025639277778
17 77.9923793025641
18 79.597512625641
};

\nextgroupplot[
scaled y ticks=manual:{}{\pgfmathparse{#1}},
tick align=outside,
tick pos=left,
title={\gls{small}},
x grid style={white!69.0196078431373!black},
xlabel={Number of drivers},
xmin=-0.5, xmax=18.5,
xtick style={color=black},
xtick={0,1,2,3,4,5,6,7,8,9,10,11,12,13,14,15,16,17,18},
xticklabels={4,,8,,12,,16,,20,,24,,28,,32,,36,,40},
y grid style={white!69.0196078431373!black},
ymin=-5.53197659367708, ymax=125.917193133885,
ytick style={color=black},
yticklabels={}
]
\addplot [line width=1.08pt, color0, dash pattern=on 4pt off 1pt on 4pt off 1pt on 1pt off 1pt, forget plot]
table {%
0 0.612590533333333
1 1.1099724
2 0.818352622222222
3 1.37372231666667
4 13.08731748
5 24.5147362555555
6 37.9563628333333
7 46.9203734857143
8 43.8987625666667
9 52.5364476518518
10 57.3344622733333
11 65.2275541266667
12 63.0675412242424
13 64.6235315944444
14 71.9627909025641
15 76.8224715619048
16 76.8188269142857
17 78.4105219155555
18 77.4021942541667
};
\addplot [line width=1.08pt, color1, dash pattern=on 5pt off 1pt on 1pt off 1pt]
table {%
0 0.612590533333333
1 1.1099724
2 0.818352622222222
3 1.37372231666667
4 9.21899804197242
5 21.2521931604711
6 37.9241821333333
7 47.0425514157101
8 43.808203925
9 52.5292620222222
10 57.3277528533333
11 65.2141115266667
12 61.1220952352994
13 61.2583933272106
14 71.9491869897436
15 76.8079898095238
16 76.7987305285714
17 78.3739005688889
18 77.3537669333333
};
\end{groupplot}

\end{tikzpicture}}	\\
		\small (a) Platform A\\
		\scalebox{.65}{
\begin{tikzpicture} 

\definecolor{color0}{rgb}{0.12156862745098,0.466666666666667,0.705882352941177}
\definecolor{color1}{rgb}{1,0.498039215686275,0.0549019607843137}

\begin{groupplot}[group style={group size=3 by 1}]
\nextgroupplot[
legend cell align={left},
legend style={
  fill opacity=0.8,
  draw opacity=1,
  text opacity=1,
  at={(0.03,0.97)},
  anchor=north west,
  draw=white!80!black
},
tick align=outside,
tick pos=left,
title={\gls{big}},
x grid style={white!69.0196078431373!black},
xlabel={Number of drivers},
xmin=-0.5, xmax=18.5,
xtick style={color=black},
xtick={0,1,2,3,4,5,6,7,8,9,10,11,12,13,14,15,16,17,18},
xticklabels={4,,8,,12,,16,,20,,24,,28,,32,,36,,40},
y grid style={white!69.0196078431373!black},
ylabel={Normalized player payoff},
ymin=-5.66085383866667, ymax=125.922007945333,
ytick style={color=black}
]
\addplot [line width=1.08pt, color0, dash pattern=on 4pt off 1pt on 4pt off 1pt on 1pt off 1pt]
table {%
0 0.551516266666667
1 0.996431
2 0.7979494
3 1.89429103333333
4 21.2128951777778
5 36.14066765
6 41.4688884
7 51.1456825166667
8 45.4021469333333
9 57.9844299777778
10 64.5714174666667
11 67.9657574444444
12 75.3759129619047
13 76.0522283833333
14 78.5166229583333
15 81.0612018333333
16 81.0662501629629
17 81.1072943333333
18 75.28503576
};
\addlegendentry{Unweighted}
\addplot [line width=1.08pt, color1, dash pattern=on 5pt off 1pt on 1pt off 1pt]
table {%
0 0.551516266666667
1 0.996431
2 0.7979494
3 1.89429103333333
4 21.4623053321986
5 36.14066765
6 41.4688884
7 46.0997573166667
8 44.2324004510555
9 57.9736515333333
10 64.5602351
11 68.0037108946295
12 75.3621896666667
13 76.0414772916667
14 78.4945166
15 81.0358587666667
16 81.0349891185185
17 81.0523623133333
18 75.2075520466667
};
\addlegendentry{Weighted}

\nextgroupplot[
scaled y ticks=manual:{}{\pgfmathparse{#1}},
tick align=outside,
tick pos=left,
title={\gls{all}},
x grid style={white!69.0196078431373!black},
xlabel={Number of drivers},
xmin=-0.5, xmax=18.5,
xtick style={color=black},
xtick={0,1,2,3,4,5,6,7,8,9,10,11,12,13,14,15,16,17,18},
xticklabels={4,,8,,12,,16,,20,,24,,28,,32,,36,,40},
y grid style={white!69.0196078431373!black},
ymin=-5.66085383866667, ymax=125.922007945333,
ytick style={color=black},
yticklabels={}
]
\addplot [line width=1.08pt, color0, dash pattern=on 4pt off 1pt on 4pt off 1pt on 1pt off 1pt]
table {%
0 0.551516266666667
1 0.996431
2 0.700842044444444
3 1.36413966666667
4 16.1259928166667
5 29.13532228
6 37.3647965466667
7 44.6579359666667
8 44.3521089333333
9 55.6489202761905
10 61.3507153583333
11 65.686526837037
12 70.0716341111111
13 75.04432186
14 73.7913898121212
15 81.0164631212121
16 81.0019620722222
17 79.5935132307692
18 76.5636897641025
};
\addplot [line width=1.08pt, color1, dash pattern=on 5pt off 1pt on 1pt off 1pt]
table {%
0 0.551516266666667
1 0.996431
2 0.700842044444444
3 1.36413966666667
4 11.2909395079073
5 29.13532228
6 37.3261797066667
7 44.8016506843343
8 41.3909294476191
9 55.6396816095238
10 61.3423285833333
11 65.6777882682368
12 70.060960437037
13 75.0386730966759
14 66.5829584148879
15 80.9980318
16 80.9785162888889
17 79.5512578307692
18 76.5040869076923
};

\nextgroupplot[
scaled y ticks=manual:{}{\pgfmathparse{#1}},
tick align=outside,
tick pos=left,
title={\gls{small}},
x grid style={white!69.0196078431373!black},
xlabel={Number of drivers},
xmin=-0.5, xmax=18.5,
xtick style={color=black},
xtick={0,1,2,3,4,5,6,7,8,9,10,11,12,13,14,15,16,17,18},
xticklabels={4,,8,,12,,16,,20,,24,,28,,32,,36,,40},
y grid style={white!69.0196078431373!black},
ymin=-5.66085383866667, ymax=125.922007945333,
ytick style={color=black},
yticklabels={}
]
\addplot [line width=1.08pt, color0, dash pattern=on 4pt off 1pt on 4pt off 1pt on 1pt off 1pt, forget plot]
table {%
0 0.784050733333333
1 0.996431
2 0.700842044444444
3 1.31859135
4 12.94571956
5 24.5078795555556
6 37.9918066666667
7 44.1432013428571
8 43.9997719666667
9 50.2924709925926
10 57.4182408666667
11 65.26401628
12 66.6560312060606
13 71.1333508666667
14 70.2949041589744
15 76.8426525714286
16 79.5989454142857
17 79.7737444888889
18 79.9096711125
};
\addplot [line width=1.08pt, color1, dash pattern=on 5pt off 1pt on 1pt off 1pt]
table {%
0 0.784050733333333
1 0.996431
2 0.700842044444444
3 1.31859135
4 9.06157937333334
5 24.5765098518653
6 37.9596259666667
7 44.0691586380952
8 44.0154425959814
9 50.285285362963
10 57.4115314466667
11 65.25057368
12 66.6980631365157
13 71.2166430744694
14 70.2813002461538
15 76.8281708190476
16 79.5788490285714
17 79.7371231422222
18 79.8612437916667
};
\end{groupplot}

\end{tikzpicture}}	 \\
		\small(b) Platform B \\
		\scalebox{.65}{
\begin{tikzpicture} 

\definecolor{color0}{rgb}{0.12156862745098,0.466666666666667,0.705882352941177}
\definecolor{color1}{rgb}{1,0.498039215686275,0.0549019607843137}

\begin{groupplot}[group style={group size=3 by 1}]
\nextgroupplot[
legend cell align={left},
legend style={
  fill opacity=0.8,
  draw opacity=1,
  text opacity=1,
  at={(0.03,0.97)},
  anchor=north west,
  draw=white!80!black
},
tick align=outside,
tick pos=left,
title={\gls{big}},
x grid style={white!69.0196078431373!black},
xlabel={Number of drivers},
xmin=-0.5, xmax=18.5,
xtick style={color=black},
xtick={0,1,2,3,4,5,6,7,8,9,10,11,12,13,14,15,16,17,18},
xticklabels={4,,8,,12,,16,,20,,24,,28,,32,,36,,40},
y grid style={white!69.0196078431373!black},
ylabel={Normalized player payoff},
ymin=-6.0124602964956, ymax=126.261666226408,
ytick style={color=black}
]
\addplot [line width=1.08pt, color0, dash pattern=on 4pt off 1pt on 4pt off 1pt on 1pt off 1pt]
table {%
0 0.784050733333333
1 0.897663066666667
2 0.8037162
3 1.14339627777778
4 10.9041634555556
5 24.3328258444444
6 28.7427520166667
7 35.23599616
8 39.40061816
9 49.4284243333333
10 53.1260711888889
11 57.0046407333333
12 61.2230837952381
13 66.9733598095238
14 63.7408455416667
15 73.3540911037037
16 71.2243765074074
17 76.7307000333333
18 75.2090795133333
};
\addlegendentry{Unweighted}
\addplot [line width=1.08pt, color1, dash pattern=on 5pt off 1pt on 1pt off 1pt]
table {%
0 0.784050733333333
1 0.897663066666667
2 0.8037162
3 1.14339627777778
4 7.53835266666667
5 24.3328258444444
6 28.7427520166667
7 35.2692656402745
8 31.4697310968757
9 49.4219572666667
10 53.1204800055556
11 57.0209064976984
12 61.216222147619
13 66.9672163285714
14 63.7297923625
15 73.3428275185185
16 71.2087459851852
17 76.7001822444444
18 75.1703376566667
};
\addlegendentry{Weighted}

\nextgroupplot[
scaled y ticks=manual:{}{\pgfmathparse{#1}},
tick align=outside,
tick pos=left,
title={\gls{all}},
x grid style={white!69.0196078431373!black},
xlabel={Number of drivers},
xmin=-0.5, xmax=18.5,
xtick style={color=black},
xtick={0,1,2,3,4,5,6,7,8,9,10,11,12,13,14,15,16,17,18},
xticklabels={4,,8,,12,,16,,20,,24,,28,,32,,36,,40},
y grid style={white!69.0196078431373!black},
ymin=-6.0124602964956, ymax=126.261666226408,
ytick style={color=black},
yticklabels={}
]
\addplot [line width=1.08pt, color0, dash pattern=on 4pt off 1pt on 4pt off 1pt on 1pt off 1pt, forget plot]
table {%
0 0.784050733333333
1 0.897663066666667
2 1.04828973333333
3 1.42522831666667
4 15.89511605
5 35.9429059166667
6 37.9501455
7 47.9683531333333
8 44.8244633777778
9 51.4577123666667
10 61.376646175
11 66.2786445166667
12 69.1281518666667
13 75.0700242733333
14 73.0997126666667
15 79.4376871666667
16 76.1540118666667
17 81.1324603611111
18 79.7629649047619
};
\addplot [line width=1.08pt, color1, dash pattern=on 5pt off 1pt on 1pt off 1pt]
table {%
0 0.784050733333333
1 0.897663066666667
2 1.04828973333333
3 1.42522831666667
4 11.0600627412406
5 35.9429059166667
6 37.9179648
7 44.5198655555556
8 44.9928490059401
9 51.4496285333333
10 61.3682594
11 66.2688136267664
12 69.11854556
13 75.2021024299759
14 73.415933657887
15 79.4207917888889
16 76.1305660833333
17 81.0866836777778
18 79.7076193952381
};

\nextgroupplot[
scaled y ticks=manual:{}{\pgfmathparse{#1}},
tick align=outside,
tick pos=left,
title={\gls{small}},
x grid style={white!69.0196078431373!black},
xlabel={Number of drivers},
xmin=-0.5, xmax=18.5,
xtick style={color=black},
xtick={0,1,2,3,4,5,6,7,8,9,10,11,12,13,14,15,16,17,18},
xticklabels={4,,8,,12,,16,,20,,24,,28,,32,,36,,40},
y grid style={white!69.0196078431373!black},
ymin=-6.0124602964956, ymax=126.261666226408,
ytick style={color=black},
yticklabels={}
]
\addplot [line width=1.08pt, color0, dash pattern=on 4pt off 1pt on 4pt off 1pt on 1pt off 1pt, forget plot]
table {%
0 0
1 0.897663066666667
2 1.04828973333333
3 1.44488476666667
4 30.7578286
5 40.9721491
6 41.36949665
7 51.2074634666667
8 46.4995711333333
9 56.5104200833333
10 66.3930335
11 67.9097781555555
12 77.7233472
13 80.9920521888889
14 74.4539660666667
15 81.0774710444444
16 73.7326306416666
17 81.19604725
18 78.75836595
};
\addplot [line width=1.08pt, color1, dash pattern=on 5pt off 1pt on 1pt off 1pt]
table {%
0 0
1 0.897663066666667
2 1.04828973333333
3 1.44488476666667
4 21.087030004931
5 31.3175803126398
6 41.3212256
7 46.2142522166667
8 46.5309123919629
9 56.4942524166667
10 66.37625995
11 67.8873738222222
12 77.7279588922337
13 81.1204503316226
14 74.4244909222222
15 81.0436802888889
16 73.6974619666667
17 81.127382225
18 78.6615113083333
};
\end{groupplot}

\end{tikzpicture}}	 \\
		\small(c) Platform C \\
	\end{tabular}
	\caption{Impact of weighted \gls{vcg} on each platform's payoff in an online setting}
	\label{fig:weights}
\end{figure}

Figures~\ref{fig:weights}a-\ref{fig:weights}c show the averaged payoff per platform per scenario, depending on the total number of drivers. Payoffs are shown for both weighted and unweighted \gls{vcg}. 	For instances that could not solve the weights optimization, unweighted payoffs are shown. 

\paragraph{Online averaged platform payoffs:}

\begin{figure}[btp]
	\scalebox{.64}{\input{chapters/figures/Scenario_impact_utility_player_VCG_online}}
	\caption{Impact of the distribution scenario on each platform's payoff in an online setting}
	\label{fig:scenario_impact_on}
\end{figure}

Similar to Figure~\ref{fig:scenario_impact}, Figure~\ref{fig:scenario_impact_on} compares the normalized payoffs obtained for each platform depending on the number of drivers, for each distribution scenario in an online setting. 
Analogous to offline results, online results show that a platform with a lower share of drivers (platform B in \gls{small}) obtains a higher payoff than a platform with a higher share of managed drivers.
Compared to the offline setting, platforms' payoffs are slightly higher due to the approximation induced by the online allocation policy compared to a perfect-information allocation (cf. Figure~\ref{fig:scenario_impact}).

\end{appendices}

\end{document}